\newtheorem{proposition}{Proposition}
\newtheorem{definition}{Definition}
\newtheorem{theorem}{Theorem}
\newtheorem{lemma}{Lemma}
\newcommand{\xlog}{\mathrm{log}}
\newcommand{\bH}{\mathbf{H}}
\newcommand{\bx}{\mathbf{x}}
\newcommand{\by}{\mathbf{y}}
\newcommand{\bd}{\mathbf{d}}
\newcommand{\bs}{\mathbf{s}}
\newcommand{\bn}{\mathbf{n}}
\newcommand{\bC}{\mathbf{C}}
\newcommand{\xC}{\widetilde{C}}
\newcommand{\xbC}{\widetilde{\mathbf{C}}}
\newcommand{\lbC}{\underline{\mathbf{C}}}
\newcommand{\xE}{\mathbb{E}}
\newcommand{\xTr}{\mathrm{Tr}}
\begin{document}
\title{Transmit Optimization with Improper Gaussian Signaling for Interference Channels}
\author{Yong~Zeng,~\IEEEmembership{Student~Member,~IEEE,}
        Cenk~M.~Yetis,~\IEEEmembership{Member,~IEEE,}
        Erry~Gunawan,~\IEEEmembership{Member,~IEEE,}
        Yong~Liang~Guan,~\IEEEmembership{Member,~IEEE,}
        and~Rui~Zhang,~\IEEEmembership{Member,~IEEE}
\thanks{Y.~Zeng, E.~Gunawan, and Y.~L.~Guan are with the School of Electrical and Electronic Engineering (EEE), Nanyang Technological University (NTU), Singapore (e-mail: ze0003ng@e.ntu.edu.sg, \{egunawan, eylguan\}@ntu.edu.sg).}
 \thanks{C.~M.~Yetis was with the School of EEE, NTU. He is now with Mevlana University, Turkey (e-mail: cenkmyetis@gmail.com).}
   \thanks{R.~Zhang is with the Electrical and Computer Engineering Department, National University of Singapore (e-mail: elezhang@nus.edu.sg).}
 }


\maketitle
\begin{abstract}
This paper studies the achievable rates of Gaussian interference channels with additive white Gaussian noise (AWGN), when \emph{improper} or circularly \emph{asymmetric} complex Gaussian signaling is applied. For the Gaussian multiple-input multiple-output interference channel (MIMO-IC) with the interference treated as Gaussian noise, we show that the user's achievable rate can be expressed as a summation of the rate achievable  by the conventional \emph{proper} or circularly \emph{symmetric} complex Gaussian signaling in terms of the users' transmit \emph{covariance} matrices, and an additional term, which is a function of both the users' transmit covariance and \emph{pseudo-covariance} matrices. The additional degrees of freedom in the pseudo-covariance matrix, which is conventionally set to be zero for the case of proper Gaussian signaling, provide an opportunity to further improve the achievable rates of Gaussian MIMO-ICs  by employing improper Gaussian signaling. To this end, this paper proposes \emph{widely linear precoding}, which efficiently  maps proper information-bearing signals to improper transmitted signals at each transmitter for any given pair of transmit covariance and pseudo-covariance matrices. In particular, for the case of two-user Gaussian single-input single-output interference channel (SISO-IC), we propose a joint covariance and pseudo-covariance optimization algorithm with improper Gaussian signaling to achieve the Pareto-optimal rates. By utilizing the separable structure of the achievable rate expression, an alternative algorithm with separate covariance and pseudo-covariance optimization is also proposed,  which guarantees the rate improvement over  conventional proper Gaussian signaling. 
\end{abstract}
\begin{IEEEkeywords}
Improper Gaussian signaling, interference channel, pseudo-covariance optimization, widely linear precoding.
\end{IEEEkeywords}
\section{Introduction}
The interference channel (IC) models multi-user communication systems  where each transmitter is intended to send  independent information to its corresponding receiver while causing interference to all other receivers.  Although information-theoretic study of the IC has a long history, characterization of its capacity region still remains an open problem in general, except for some special cases such as that with the presence of  ``strong'' interference \cite{272,164}. For the single-input single-output IC, termed SISO-IC, the  best achievable rate region to date is obtained by the celebrated Han-Kobayashi scheme \cite{164}.  Recently, it has been shown in \cite{163} that a particular form of this scheme achieves within one bit to the capacity region of the two-user Gaussian SISO-IC with additive white Gaussian noise (AWGN). Since such  capacity-approaching techniques require multi-user encoding/decoding, which are difficult to implement in practical systems, a more pragmatic approach is to employ single-user encoding and decoding by treating the interference as Gaussian noise at all receivers. In fact, this simplified approach  has been shown to be sum-capacity optimal for Gaussian ICs when the interference level is below a certain threshold \cite{278}.

 Under the assumption of single-user detection (SUD) with the interference treated as Gaussian noise, the transmit optimization  problem for Gaussian ICs reduces to resource allocation among the transmitters for interference mitigation, which has received significant attention in the last few decades. Early works on resource allocation for Gaussian ICs mostly focused on power control since SISO-IC with single-antenna terminals was considered (see e.g. \cite{344,253,301} and references therein). When transmitters/receivers are equipped with multiple antennas, the network performance can be further improved via transmit/receive beamforming. One useful technique applied for optimizing transmit beamformers is to transform the design problem into an equivalent receiver beamformer optimization problem via the so-called uplink-downlink or network duality principle \cite{302,305,306,307,308}. The transmit beamforming optimization problems for power minimization and signal-to-interference-plus-noise ratio (SINR) balancing  can also be directly solved  by convex optimization techniques, such as the second-order cone programming (SOCP) \cite{214} and semidefinite programming (SDP) \cite{309}.

  For Gaussian ICs with the interference treated as Gaussian noise, the sum-rate maximization problem is in general difficult to be solved globally optimally due to its non-convexity. In \cite{325}, it was shown that finding  globally optimal beamformers for the weighted sum-rate maximization (WSRMax) in the Gaussian multiple-input single-output IC (MISO-IC) is an NP-hard problem.  Algorithms based on the principle of \emph{interference pricing} have been proposed for achieving local optimums \cite{313}, while in \cite{329}, a distributed algorithm was proposed for MISO-IC by using the virtual SINR framework. Gradient descent algorithms have also been proposed for  Gaussian multiple-input multiple-output IC (MIMO-IC) over transmit covariance matrices \cite{233} or precoding matrices \cite{158}. More recently, for Gaussian SISO-IC, single-input multiple-output IC (SIMO-IC) and MISO-IC, globally optimal  solutions to the WSRMax problem have been obtained  under the \emph{monotonic optimization} framework \cite{321,322,320,345,326}. However, the complexity of such globally optimal solutions increases exponentially with the number of users, and their generalization to more general MIMO-IC remains unknown.
  An alternative technique for solving WSRMax problems for MIMO-ICs is via iteratively minimizing the weighted mean-square-error (MSE), which utilizes the inherent relationship between the mutual information and MSE \cite{333,237,335}. The study of ICs with game-theoretic models has also been given in \cite{316} and references therein. Moreover, it is worth mentioning that there has been a great deal of interest in the last few years on studying Gaussian ICs from the degrees-of-freedom (DoF) perspective \cite{92}.  A key technique to achieve higher DoFs than previously believed for Gaussian ICs is interference alignment (IA) \cite{85}.
  Since DoF only provides the approximated capacity  at the asymptotically high signal-to-noise ratio (SNR), a number of IA-based precoding schemes with improved sum-rate performance at practical SNRs have been proposed in \cite{158}, \cite{177,187,252}.

  As for characterization of the achievable rate region for Gaussian ICs with interference treated as noise, various solutions have been obtained for the SISO-IC \cite{250}, SIMO-IC \cite{320}, and MISO-IC \cite{249,240,243,323}. The Pareto boundary of the achievable rate region for ICs consists of all the achievable rate-tuples, at each of which it is impossible to improve one  user's rate, without simultaneously decreasing the rate of at least one of other users. A traditional approach for characterizing Pareto boundaries of Gaussian ICs is via solving WSRMax problems. However, as pointed out in \cite{240}, the WSRMax approach  cannot guarantee the finding of all Pareto-boundary points due to the non-convexity of the achievable rate region. An alternative method based on the concept of \emph{rate-profile} was thus proposed  in \cite{240}, which is able to characterize the complete Pareto boundary for ICs. Besides, the rate-profile approach generally results in optimization problems that  are easier to handle than  conventional WSRMax problems for ICs \cite{320,240}.

It is necessary to point out that in all the aforementioned works on Gaussian ICs, the transmitted signals are assumed to be \emph{proper} or circularly symmetric complex Gaussian (CSCG) distributed.  A key property of proper Gaussian  random vectors (RVs) is that their  second-order statistics are completely specified by the conventional covariance matrix under a zero-mean assumption. In contrast, for the more general \emph{improper} Gaussian RVs, an extra parameter called \emph{pseudo-covariance} matrix is required for the complete second-order characterization \cite{245,246,248}. Most of the existing works on Gaussian ICs have adopted the proper Gaussian assumption  without any justification. This may be due to the common practice of modeling the additive receiver noise as proper Gaussian, and the well-known maximum-entropy theorem \cite{245}, i.e., proper Gaussian RVs maximize the differential entropy for any given covariance matrix. As a result, proper Gaussian signaling has been shown to be capacity optimal for the Gaussian point-to-point channel, multiple-access channel (MAC) and broadcast channel (BC). However, for Gaussian ICs with  the interference treated as noise, improper Gaussian signaling provides a new  opportunity to further improve the achievable rates over the conventional proper Gaussian signaling \cite{348}. For instance, it was shown in \cite{189} that improper Gaussian signaling, together with symbol extensions and IA, is able to improve the DoF for the three-user SISO-IC with time-invariant channel coefficients at the asymptotically high SNR. In \cite{255,256}, it was shown that the achievable rate region can be enlarged with improper Gaussian signaling even for the two-user SISO-IC at finite SNR. This is particularly interesting since it is known that for the two-user SISO-IC, no DoF gain is achievable with IA \cite{348}.  More specifically, a suboptimal scheme was proposed in \cite{255} for the two-user SISO-IC, where the transmit covariance matrices for the equivalent real-valued MIMO-ICs are restricted to be rank-1. In \cite{256}, the Pareto-optimal transmit covariance matrices for the two-user SISO-IC are obtained by an exhaustive search method.

  The prior works \cite{189,255,256} on the study of improper Gaussian signaling for ICs  are all based on the equivalent double-sized real-valued MIMO-IC matrix by separating the real and imaginary parts of the complex-valued channels. Although any complex-valued system can be transformed into an equivalent real-valued system, as pointed out in \cite{262,246}, much of the elegancy of the system description is lost. Therefore, in this paper, we adopt the complex-valued channel model for studying improper Gaussian signaling in Gaussian ICs to gain new insights. 
 The main contributions of this paper are summarized as follows:
\begin{itemize}
\item Based on existing results on improper Gaussian RVs, we derive a new achievable rate expression for the general $K$-user MIMO-IC, when improper Gaussian signaling is applied. Our result shows that the user's achievable rate can be expressed as a summation of the rate achievable  by the conventional proper Gaussian signaling in terms of the users' transmit covariance matrices, and an additional term, which is a function of both the users' transmit covariance and pseudo-covariance matrices. This new result implies that the use of improper Gaussian signaling for MIMO-ICs with interference treated as noise is able to improve the achievable rate over the conventional proper Gaussian signaling with any given set of covariance matrices of transmitted signals, by further optimizing their pseudo-covariance matrices.
\item For any given pair of signal covariance and pseudo-covariance matrices at each transmitter, we consider the practical problem of generating improper transmitted signals from proper information-bearing signals. Based on existing techniques for improper RVs \cite{262}, we propose an efficient method for this implementation, named as \emph{widely linear precoding}. 
\item By adopting the rate-profile method, we formulate the optimization problem for the two-user SISO-IC to characterize the Pareto boundary of the achievable rate region with improper Gaussian signaling. By applying the celebrated semidefinite relaxation (SDR) technique \cite{349},  a joint covariance and pseudo-covariance optimization algorithm is proposed, which achieves near-optimal rate-pairs. Furthermore, by utilizing the separable structure of the achievable rate expression with improper Gaussian signaling, a separate covariance and pseudo-covariance optimization algorithm is also proposed, which guarantees the rate improvement over conventional proper Gaussian signaling with any given transmit covariance. 
\end{itemize}

The rest of this paper is organized as follows. Section~\ref{S:MIMO-IC} studies improper Gaussian signaling for the general MIMO-IC, where a new achievable rate expression is derived and widely linear precoding is proposed. Section~\ref{S:problemFormulation} focuses on the two-user SISO-IC setup, where the problem formulation for characterizing the Pareto boundary of the achievable rate region is given. In Section~\ref{S:jointOpt}, a SDR-based joint covariance and pseudo-covariance optimization algorithm for the two-user SISO-IC is proposed. In Section~\ref{S:separate}, an alternative SOCP-based algorithm by separate covariance and pseudo-covariance optimizations is presented. Section~\ref{S:simulation} provides numerical results. Finally, we conclude the paper in Section~\ref{S:conclusions}.

\emph{Notations:} In this paper, scalars are denoted by italic letters. Boldface lower- and upper-case letters denote vectors and matrices, respectively. $\mathbf{I}_M$ denotes an $M\times M$ identity matrix and the subscript $M$ is omitted if its value is clear from the context. $\mathbf{0}$ denotes an all-zero matrix. For a square matrix $\mathbf{S}$, $\mathrm{Tr}(\mathbf{S})$, $|\mathbf{S}|$, $\mathbf{S}^{-1}$ denote the trace, determinant and inverse of $\mathbf{S}$, respectively. $\mathbf{S}\succeq \mathbf{0}$ and $\mathbf{S}\succ \mathbf{0}$ mean that $\mathbf{S}$ is positive semidefinite and positive definite, respectively. $\mathbb{C}^{M\times N}$ and $\mathbb{R}^{M\times N}$  denote the space of $M\times N$ complex and real matrices, respectively. $\mathbb{S}^{K}$ and $\mathbb{H}^{K}$ denote the $K\times K$ real-valued symmetric and complex-valued Hermitian matrices, respectively. For an arbitrary matrix $\mathbf{A}$, $\mathbf{A}^*$, $\mathbf{A}^{T}$, $\mathbf{A}^{H}$ and $\mathrm{rank}\{\mathbf{A}\}$ represent the complex-conjugate, transpose, conjugate transpose and rank of $\mathbf{A}$, respectively. $\mathrm{diag}\{\bx\}$ represents a diagonal matrix with the elements in the main diagonal given by $\bx$. 
  $\mathcal{N}(\boldsymbol \mu,\mathbf{C})$ represents the real-valued Gaussian RV with mean $\boldsymbol \mu$ and covariance matrix $\mathbf{C}$. $\mathcal{CN}(\mathbf{x},\mathbf{\Sigma})$ represents the CSCG RV with mean $\mathbf{x}$ and covariance matrix $\mathbf{\Sigma}$. 
 For a complex number $x$, $|x|$ denotes its magnitude.
 The symbol $i$ represents the imaginary unit, i.e., $i ^2=-1$. $[\mathbf v]_k$ denotes the $k$th element of a vector $\mathbf v$. $\Re\{\cdot\}$ and $\Im \{\cdot\}$ represent the real and imaginary parts of complex numbers/matrices, respectively. $I(\mathbf x ; \mathbf y)$ represents the mutual information between two RVs $\mathbf x$ and $\mathbf y$.

\section{Improper Gaussian Signaling for MIMO-IC}\label{S:MIMO-IC}
Consider a $K$-user MIMO-IC, where each transmitter is intended to send independent information to its corresponding receiver, while possibly interfering with all other $K-1$ receivers. Denote the number of transmitting and receiving antennas for each user by $M$ and $N$, respectively. Assuming the narrow-band transmission, the equivalent baseband received signal at each receiver can be expressed as
\begin{align}\label{E:yk}
\by_k(n)=\bH_{kk}\bx_k(n)+\sum_{j\neq k} \bH_{kj}\bx_j(n)+\bn_k(n), \ \forall k,
\end{align}
where $n$ is the symbol index, $\bH_{kk}\in \mathbb{C}^{N\times M}$ denotes the direct channel matrix from transmitter $k$ to receiver $k$, while $\bH_{kj}, j\neq k$, denotes the interference channel matrix from transmitter $j$ to receiver $k$; we assume  quasi-static fading and thus all channels are constant over $n$'s in \eqref{E:yk} for the case of our interest; $\bn_k(n)$ represents the independent and identically distributed (i.i.d.) CSCG noise vector at receiver $k$ with $\bn_k(n)\sim \mathcal{CN}(\mathbf{0}, \sigma^2\mathbf{I})$; and $\bx_k(n)\in \mathbb{C}^{M}$ is the transmitted signal vector from transmitter $k$, which is independent of $\bx_j(n)$,  $\forall j\neq k$. In this paper, for the purpose of exposition, we assume that symbol extensions over time as in \cite{189} are not used. Hence, $\bx_k(n)$ is independent over $n$. For brevity, $n$ is omitted in the rest of this paper.  Different from the conventional setup where proper Gaussian signaling is assumed, i.e., $\bx_k\sim \mathcal{CN}(\mathbf 0, \mathbf C_{\bx_k})$, with $\mathbf C_{\bx_k}$ denoting the transmit covariance matrix, in this paper we consider the more general improper Gaussian signals, for which some preliminaries are given next.
\vspace{-1.5ex}
\subsection{Preliminary for Improper Random Vectors}
For a zero-mean RV $\mathbf{z}\in \mathbb{C}^{n}$, the covariance matrix $\bC_{\mathbf{z}}$ and pseudo-covariance matrix $\xbC_z$ are defined as \cite{245}
\begin{align}\label{E:CDef}
\bC_{\mathbf{z}}&\triangleq \xE(\mathbf{z}\mathbf{z}^H),\  \xbC_{\mathbf{z}}\triangleq \xE(\mathbf{z}\mathbf{z}^T).
\end{align}
By definition, it is easy to verify that the covariance matrix $\bC_{\mathbf{z}}$ is Hermitian and positive semidefinite, and the pseudo-covariance matrix $\xbC_{\mathbf{z}}$ is symmetric.
\begin{definition} \cite{245}:
 A complex RV $\mathbf{z}$ is called proper if its pseudo-covariance matrix $\xbC_{\mathbf{z}}$ vanishes to a zero matrix; otherwise, it is called improper.
\end{definition}
\begin{lemma} \label{L:uncorrelated} \cite{245}:
Two zero-mean complex RVs $\mathbf x$ and $\mathbf z$ are uncorrelated if and only if $\bC_{\mathbf{xz}}=\mathbf{0}$ and $\xbC_{\mathbf{xz}}=\mathbf{0}$, where $\bC_{\mathbf{xz}}\triangleq \mathbb{E}(\mathbf{x}\mathbf{z}^H)$ and $\xbC_{\mathbf{xz}}\triangleq \mathbb{E}(\mathbf{x}\mathbf{z}^T)$.
\end{lemma}

A more restrictive definition than properness is known as \emph{circularly} symmetric, which is defined as follows.
\begin{definition}\cite{246}:
A complex RV $\mathbf{z}$ is circularly symmetric if its distribution is rotationally invariant, i.e.,
 $\mathbf z$ and $\mathbf {\hat{z}}=e^{i \alpha }\mathbf z$ have the same distribution for any real value $\alpha$.
\end{definition}
For a circularly symmetric RV $\mathbf{z}$, we have
\begin{align}
\xbC_{\mathbf z}=\xbC_{\mathbf {\hat{z}}}=\mathbb{E}(\mathbf {\hat{z}}\mathbf {\hat{z}}^T)=e^{i 2\alpha}\xbC_{\mathbf z}, \ \forall \alpha, \notag
\end{align}
which implies  $\xbC_{\mathbf z}=\mathbf{0}$. Thus, circularity implies properness, but the converse is not true in general. However, if $\mathbf z$ is a zero-mean \emph{Gaussian} RV, then properness and circularity are equivalent, as given by the following lemma.
\begin{lemma}\cite{246}:
A complex zero-mean Gaussian RV $\mathbf z$ is circularly symmetric if and only if it is proper.
\end{lemma}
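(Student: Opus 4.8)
The plan is to dispatch the two implications separately. The ``only if'' direction---circular symmetry implies properness---has in fact already been established in the discussion immediately preceding the lemma, and it uses nothing about the Gaussianity of $\mathbf z$: from the requirement that $\mathbf z$ and $e^{i\alpha}\mathbf z$ be identically distributed one gets $\xbC_{\mathbf z}=e^{i2\alpha}\xbC_{\mathbf z}$ for every real $\alpha$, which forces $\xbC_{\mathbf z}=\mathbf 0$. So the real content is the converse, and here Gaussianity is essential.

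For the converse I would first record the fact that a zero-mean \emph{Gaussian} RV is completely characterized by the pair $(\bC_{\mathbf z},\xbC_{\mathbf z})$. To see this, pass to the real composite vector $\bar{\mathbf z}=[\,\Re\{\mathbf z\}^T,\ \Im\{\mathbf z\}^T\,]^T\in\mathbb R^{2n}$, which is zero-mean real Gaussian and hence determined by its $2n\times 2n$ covariance $\xE(\bar{\mathbf z}\bar{\mathbf z}^T)$. Expanding $\mathbf z\mathbf z^H$ and $\mathbf z\mathbf z^T$ in terms of $\Re\{\mathbf z\}$ and $\Im\{\mathbf z\}$ shows that the four real blocks of $\xE(\bar{\mathbf z}\bar{\mathbf z}^T)$---for example $\xE(\Re\{\mathbf z\}\Re\{\mathbf z\}^T)=\tfrac12\Re\{\bC_{\mathbf z}+\xbC_{\mathbf z}\}$---are real-linear functions of $\Re\{\bC_{\mathbf z}\},\Im\{\bC_{\mathbf z}\},\Re\{\xbC_{\mathbf z}\},\Im\{\xbC_{\mathbf z}\}$, and that this correspondence is invertible. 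Hence $(\bC_{\mathbf z},\xbC_{\mathbf z})$ and $\xE(\bar{\mathbf z}\bar{\mathbf z}^T)$ carry exactly the same information, so two zero-mean complex Gaussian RVs with the same covariance and the same pseudo-covariance have the same distribution.

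With that in hand the argument is short. Let $\mathbf z$ be zero-mean Gaussian and proper, so $\xbC_{\mathbf z}=\mathbf 0$. Fix any real $\alpha$ and set $\mathbf{\hat z}=e^{i\alpha}\mathbf z$. Being a fixed linear image of a Gaussian vector, $\mathbf{\hat z}$ is again zero-mean Gaussian; moreover $\bC_{\mathbf{\hat z}}=e^{i\alpha}\bC_{\mathbf z}e^{-i\alpha}=\bC_{\mathbf z}$ and $\xbC_{\mathbf{\hat z}}=e^{i2\alpha}\xbC_{\mathbf z}=\mathbf 0=\xbC_{\mathbf z}$. Thus $\mathbf{\hat z}$ and $\mathbf z$ are zero-mean Gaussian with matching covariance and pseudo-covariance, so by the characterization above they are identically distributed. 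Since $\alpha$ was arbitrary, $\mathbf z$ is circularly symmetric, which completes the proof.

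The one step that needs genuine care---and which I would expect to be the main obstacle to write cleanly---is the characterization claim: establishing that the map from $(\bC_{\mathbf z},\xbC_{\mathbf z})$ to the real composite covariance $\xE(\bar{\mathbf z}\bar{\mathbf z}^T)$ is a bijection (equivalently, writing out the explicit block formulas and inverting them), and then invoking the standard fact that a real Gaussian law is pinned down by its covariance matrix, e.g.\ via its characteristic function. Everything else reduces to the one-line second-order moment computation under the rotation $\mathbf z\mapsto e^{i\alpha}\mathbf z$.
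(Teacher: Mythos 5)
Your proof is correct. The paper itself does not prove this lemma---it is cited directly from reference [246]---and the only piece it works out in the text is the ``only if'' direction (the computation $\xbC_{\mathbf z}=e^{i2\alpha}\xbC_{\mathbf z}$ forcing $\xbC_{\mathbf z}=\mathbf 0$), which you reproduce faithfully and correctly observe requires no Gaussianity. Your converse argument is the standard and sound one: the real composite vector $[\Re\{\mathbf z\}^T,\ \Im\{\mathbf z\}^T]^T$ is zero-mean real Gaussian, its covariance is in bijective linear correspondence with the pair $(\bC_{\mathbf z},\xbC_{\mathbf z})$ (your block formula $\xE(\Re\{\mathbf z\}\Re\{\mathbf z\}^T)=\tfrac12\Re\{\bC_{\mathbf z}+\xbC_{\mathbf z}\}$ checks out and is consistent with the paper's expression for $\mathbf T^H\lbC_{\bx_k}\mathbf T$), and $e^{i\alpha}\mathbf z$ preserves both second-order quantities when $\xbC_{\mathbf z}=\mathbf 0$, hence preserves the law. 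You correctly identify the one step needing care---the invertibility of the block correspondence---and it does invert, since $\bC_{\mathbf z}$ and $\xbC_{\mathbf z}$ are recovered as real-linear combinations of the four real cross-covariance blocks.
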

For example, the commonly adopted assumption that the noise vector $\mathbf n_k$ in \eqref{E:yk} is zero-mean CSCG is equivalent to that $\mathbf n_k$ is a proper Gaussian RV, whose pseudo-covariance matrix satisfies $\xbC_{\mathbf n_k}=\mathbf{0}$.
For an arbitrary complex RV $\mathbf{z}$, define $\lbC_{\mathbf{z}}$ as the covariance matrix of the augmented vector $[\begin{matrix} \mathbf{z}^T & {(\mathbf{z}^*)}^T \end{matrix}]^T$, i.e.,
\begin{align}\label{E:augC}
\lbC_{\mathbf{z}}\triangleq \xE \bigg( \bigg[\begin{matrix} \mathbf{z} \\ \mathbf{z}^* \end{matrix}\bigg] \bigg[\begin{matrix} \mathbf{z} \\ \mathbf{z}^* \end{matrix}\bigg]^H \bigg)=\bigg[\begin{matrix}\bC_{\mathbf{z}} & \xbC_{\mathbf{z}} \\ \xbC_{\mathbf{z}}^* & \bC_{\mathbf{z}}^* \end{matrix} \bigg].
\end{align}
The augmented covariance matrix $\lbC_{\mathbf{z}}$ obviously has some built-in redundancy for the second-order characterization of $\mathbf{z}$; however, it is useful as shown in the following two theorems.
\begin{theorem} \cite{246}: \label{T:validPair}
$\bC_{\mathbf{z}}$ and $\xbC_{\mathbf{z}}$ are a valid pair of covariance and pseudo-covariance matrices, i.e., there exists a RV $\mathbf{z}$ with covariance and pseudo-covariance matrices given by $\bC_{\mathbf{z}}$ and $\xbC_{\mathbf{z}}$, respectively,  if and only if the augmented covariance matrix $\lbC_{\mathbf{z}}$ is positive semidefinite, i.e., $\lbC_{\mathbf{z}}\succeq \mathbf{0}$.
\end{theorem}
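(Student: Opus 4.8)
The plan is to establish both directions by passing to the equivalent real-valued representation of $\mathbf z$, writing $\mathbf z=\mathbf z_r+i\mathbf z_i$ with $\mathbf z_r\triangleq\Re\{\mathbf z\}$ and $\mathbf z_i\triangleq\Im\{\mathbf z\}$.

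\emph{Necessity} is immediate from the definition \eqref{E:augC}: if a RV $\mathbf z$ with the stated covariance $\bC_{\mathbf z}$ and pseudo-covariance $\xbC_{\mathbf z}$ exists, then $\lbC_{\mathbf z}$ is literally the ordinary covariance matrix of the augmented vector $\underline{\mathbf z}\triangleq[\begin{smallmatrix}\mathbf z\\ \mathbf z^*\end{smallmatrix}]$, i.e., $\lbC_{\mathbf z}=\xE(\underline{\mathbf z}\,\underline{\mathbf z}^H)$, and every covariance matrix is positive semidefinite; hence $\lbC_{\mathbf z}\succeq\mathbf 0$. This direction needs nothing beyond \eqref{E:CDef}–\eqref{E:augC}.

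\emph{Sufficiency} requires a construction. I would introduce the invertible transformation $\mathbf T\triangleq[\begin{smallmatrix}\mathbf I & i\mathbf I\\ \mathbf I & -i\mathbf I\end{smallmatrix}]$, which satisfies $\mathbf T\mathbf T^H=2\mathbf I$ and $\underline{\mathbf z}=\mathbf T[\begin{smallmatrix}\mathbf z_r\\ \mathbf z_i\end{smallmatrix}]$, so that the real covariance matrix of $[\begin{smallmatrix}\mathbf z_r\\ \mathbf z_i\end{smallmatrix}]$ is necessarily $\mathbf C_{\mathrm r}\triangleq\mathbf T^{-1}\lbC_{\mathbf z}(\mathbf T^{-1})^H=\tfrac14\mathbf T^H\lbC_{\mathbf z}\mathbf T$. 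The candidate $\mathbf z$ is then obtained by drawing $[\begin{smallmatrix}\mathbf z_r\\ \mathbf z_i\end{smallmatrix}]\sim\mathcal N(\mathbf 0,\mathbf C_{\mathrm r})$ and setting $\mathbf z=\mathbf z_r+i\mathbf z_i$; for this to be legitimate one must check that $\mathbf C_{\mathrm r}$ is a genuine real symmetric positive-semidefinite matrix. Positive semidefiniteness is free, since $\mathbf C_{\mathrm r}$ is a congruence transform of $\lbC_{\mathbf z}\succeq\mathbf 0$. Reality and symmetry are obtained by substituting the block form \eqref{E:augC} and expanding: the four blocks of $\mathbf C_{\mathrm r}$ work out to $\tfrac12\Re\{\bC_{\mathbf z}+\xbC_{\mathbf z}\}$, $\tfrac12\Im\{\xbC_{\mathbf z}-\bC_{\mathbf z}\}$, $\tfrac12\Im\{\bC_{\mathbf z}+\xbC_{\mathbf z}\}$, and $\tfrac12\Re\{\bC_{\mathbf z}-\xbC_{\mathbf z}\}$, which are manifestly real; symmetry then follows from the structural constraints noted right after \eqref{E:CDef}, namely that $\bC_{\mathbf z}$ Hermitian forces $\Re\{\bC_{\mathbf z}\}$ symmetric and $\Im\{\bC_{\mathbf z}\}$ skew-symmetric, while $\xbC_{\mathbf z}$ symmetric forces $\Re\{\xbC_{\mathbf z}\}$ and $\Im\{\xbC_{\mathbf z}\}$ symmetric, so that the diagonal blocks are symmetric and the off-diagonal blocks are mutual transposes. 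Finally, reversing the same expansion verifies that the constructed $\mathbf z$ satisfies $\xE(\mathbf z\mathbf z^H)=\bC_{\mathbf z}$ and $\xE(\mathbf z\mathbf z^T)=\xbC_{\mathbf z}$, which completes the proof (and incidentally produces a Gaussian realization, which is stronger than what the statement demands).

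I expect the only non-routine part to be the block bookkeeping that identifies $\mathbf C_{\mathrm r}$ and confirms its reality and symmetry; the positive-semidefiniteness claim in the $\Leftarrow$ direction and the entire $\Rightarrow$ direction are essentially automatic once the real representation is set up. A mild alternative that avoids $\mathbf T$ altogether is to posit those four real blocks for $\mathbf C_{\mathrm r}$ from the outset and verify directly that $\mathbf T\mathbf C_{\mathrm r}\mathbf T^H=\lbC_{\mathbf z}$, but the underlying computation is unchanged.
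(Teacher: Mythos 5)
Your proposal is correct, but there is nothing in the paper to compare it against: Theorem~\ref{T:validPair} is imported from \cite{246} and the authors give no proof of it, so you are supplying an argument the paper deliberately omits. Both directions of your argument check out. Necessity is indeed immediate from \eqref{E:augC}. For sufficiency, your block computation is right: with $\mathbf T=\bigl[\begin{smallmatrix}\mathbf I & i\mathbf I\\ \mathbf I & -i\mathbf I\end{smallmatrix}\bigr]$ one gets $\tfrac14\mathbf T^H\lbC_{\mathbf{z}}\mathbf T=\tfrac12\bigl[\begin{smallmatrix}\Re\{\bC_{\mathbf z}+\xbC_{\mathbf z}\} & \Im\{\xbC_{\mathbf z}-\bC_{\mathbf z}\}\\ \Im\{\bC_{\mathbf z}+\xbC_{\mathbf z}\} & \Re\{\bC_{\mathbf z}-\xbC_{\mathbf z}\}\end{smallmatrix}\bigr]$, which is real, symmetric (using $\Im\{\bC_{\mathbf z}\}$ skew-symmetric and $\Im\{\xbC_{\mathbf z}\}$ symmetric, both of which follow from $\lbC_{\mathbf z}\succeq\mathbf 0$ and the block pattern, so you do not even need to assume them separately), and positive semidefinite by congruence; drawing a real Gaussian with this covariance and recombining yields the desired $\mathbf z$, and in fact a Gaussian witness, which is stronger than required. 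It is worth noting that your construction is exactly the machinery the paper itself deploys later for a different purpose: your $\mathbf T$ is the paper's unitary $\mathbf T$ up to the $1/\sqrt 2$ normalization, and your $\mathbf C_{\mathrm r}$ is (up to that scaling) the real symmetric matrix $\mathbf T^H\lbC_{\bx_k}\mathbf T$ displayed just before \eqref{E:RealEVD} in the derivation of Theorem~\ref{T:AEVD}. So your proof costs nothing extra in terms of tools and has the side benefit of making the widely linear precoding construction of Section~\ref{S:MIMO-IC} self-evident: the generalized Cholesky factor in \eqref{E:CSqrt} is precisely the map that turns a standard proper input into your constructed $\mathbf z$.
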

Note that the conditions of the covariance matrix $\bC_{\mathbf{z}}$  being Hermitian and positive semidefinite, and the pseudo-covariance matrix $\xbC_{\mathbf{z}}$ being symmetric are already implied by $\lbC_{\mathbf{z}}\succeq \mathbf{0}$.
 Furthermore, for the improper complex \emph{Gaussian} RVs, the differential entropy is in general a function of both the covariance and pseudo-covariance matrices, which can be expressed in terms of $\lbC_{\mathbf{z}}$ as shown by the following theorem.
\begin{theorem} \label{T:entropyGaussian} \cite{246}:
 The  entropy of a complex Gaussian RV $\mathbf{z}\in \mathbb{C}^{n}$ with augmented covariance matrix $\lbC_{\mathbf{z}}$ is
\begin{align}
h(\mathbf{z})=\frac{1}{2}\xlog \big ((\pi e)^{2n} |\lbC_{\mathbf{z}} | \big). \label{E:hz}
\end{align}
\end{theorem}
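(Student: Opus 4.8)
The plan is to reduce the statement to the classical formula for the differential entropy of a \emph{real} Gaussian vector and then track how the determinant transforms when one passes from the real representation of $\mathbf{z}$ to its augmented complex representation. Recall first that the differential entropy of a complex random vector is, by convention, that of its real representation, i.e.\ $h(\mathbf{z})=h(\tilde{\mathbf{z}})$ with $\tilde{\mathbf{z}}\triangleq\big[\,\Re\{\mathbf{z}\}^T\ \ \Im\{\mathbf{z}\}^T\,\big]^T\in\mathbb{R}^{2n}$. Since $\mathbf{z}$ is zero-mean Gaussian, $\tilde{\mathbf{z}}$ is a zero-mean real Gaussian vector with covariance matrix $\mathbf{R}\triangleq\xE(\tilde{\mathbf{z}}\tilde{\mathbf{z}}^T)\in\mathbb{R}^{2n\times 2n}$, so the standard result for real Gaussians gives $h(\tilde{\mathbf{z}})=\tfrac{1}{2}\xlog\big((2\pi e)^{2n}|\mathbf{R}|\big)$. (If $\mathbf{R}$ is singular then both this quantity and, as one checks, $\lbC_{\mathbf{z}}$ are degenerate, with entropy $-\infty$, so it suffices to argue in the nonsingular case.)

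Next I would connect $\mathbf{R}$ to $\lbC_{\mathbf{z}}$ through the invertible linear map taking the real representation to the augmented one. Writing $\mathbf{z}=\Re\{\mathbf{z}\}+i\,\Im\{\mathbf{z}\}$ yields $\big[\,\mathbf{z}^T\ \ (\mathbf{z}^*)^T\,\big]^T=\mathbf{T}\,\tilde{\mathbf{z}}$ with $\mathbf{T}=\bigl[\begin{smallmatrix}\mathbf{I}_n & i\mathbf{I}_n\\ \mathbf{I}_n & -i\mathbf{I}_n\end{smallmatrix}\bigr]$; substituting into the definition \eqref{E:augC} gives $\lbC_{\mathbf{z}}=\mathbf{T}\,\mathbf{R}\,\mathbf{T}^H$ and hence $|\lbC_{\mathbf{z}}|=|\xdet(\mathbf{T})|^2\,|\mathbf{R}|$. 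A one-line block-determinant computation gives $\xdet(\mathbf{T})=(-2i)^n$, so $|\xdet(\mathbf{T})|^2=2^{2n}$ and $|\mathbf{R}|=2^{-2n}|\lbC_{\mathbf{z}}|$. Plugging this back, $h(\mathbf{z})=\tfrac{1}{2}\xlog\big((2\pi e)^{2n}\,2^{-2n}\,|\lbC_{\mathbf{z}}|\big)=\tfrac{1}{2}\xlog\big((\pi e)^{2n}|\lbC_{\mathbf{z}}|\big)$, which is \eqref{E:hz}.

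There is no genuine obstacle in this argument; the only point requiring care is the bookkeeping of constants: the unnormalized definitions of $\bC_{\mathbf{z}}$ and $\xbC_{\mathbf{z}}$ in \eqref{E:CDef} must be used consistently throughout (so that no stray factors of $1/2$ creep into $\mathbf{R}$), and $\xdet(\mathbf{T})$ must be pinned down exactly so that the $(2\pi e)^{2n}$ of the real formula collapses to precisely $(\pi e)^{2n}$. An equivalent route is to first exhibit the density of an improper Gaussian, namely $f(\mathbf{z})=\pi^{-n}|\lbC_{\mathbf{z}}|^{-1/2}\exp\!\big(-\tfrac{1}{2}\underline{\mathbf{z}}^H\lbC_{\mathbf{z}}^{-1}\underline{\mathbf{z}}\big)$ with $\underline{\mathbf{z}}=\big[\,\mathbf{z}^T\ \ (\mathbf{z}^*)^T\,\big]^T$, and then compute $h(\mathbf{z})=-\xE[\xlog f(\mathbf{z})]$ using $\xE\big[\underline{\mathbf{z}}^H\lbC_{\mathbf{z}}^{-1}\underline{\mathbf{z}}\big]=\xTr(\lbC_{\mathbf{z}}^{-1}\lbC_{\mathbf{z}})=2n$; but since that density itself is derived from the same change of variables, the first route is the more direct one.
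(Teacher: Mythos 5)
Your argument is correct: the paper itself states Theorem~\ref{T:entropyGaussian} without proof, citing \cite{246}, and your derivation is precisely the standard one from that reference — reduce to the real Gaussian entropy formula $\tfrac{1}{2}\xlog\big((2\pi e)^{2n}|\mathbf{R}|\big)$ and use $\lbC_{\mathbf{z}}=\mathbf{T}'\mathbf{R}\mathbf{T}'^H$ with $|\xdet(\mathbf{T}')|^2=2^{2n}$ to convert the determinant, which exactly absorbs the factor $2^{2n}$ and turns $(2\pi e)^{2n}$ into $(\pi e)^{2n}$. The constant bookkeeping checks out (e.g., the proper case $\xbC_{\mathbf{z}}=\mathbf{0}$ recovers $h(\mathbf{z})=\xlog\big((\pi e)^n|\bC_{\mathbf{z}}|\big)$), so there is nothing to add.
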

Theorem~\ref{T:entropyGaussian} generalizes the entropy result for proper Gaussian RVs. If $\xbC_{\mathbf z}=\mathbf{0}$, \eqref{E:hz} reduces to the well-known expression for proper Gaussian RVs $h(\mathbf{z})=\xlog \big ((\pi e)^{n} |\bC_{\mathbf{z}} | \big)$ \cite{245}.
\vspace{-1.5ex}
 \subsection{Achievable Rate with Improper Gaussian Signaling}\label{S:rate}
In this subsection, we derive the achievable rate by improper Gaussian signaling for the $K$-user MIMO-IC defined in \eqref{E:yk}. Denote the covariance and pseudo-covariance matrices of the zero-mean transmitted Gaussian RV $\bx_k$ by $\bC_{\bx_k}$ and $\xbC_{\bx_k}$, respectively, i.e.,
\[
\bC_{\bx_k}=\xE(\bx_k\bx_k^H), \ \xbC_{\bx_k}=\xE(\bx_k\bx_k^T),\  k=1,\cdots, K.
\]
Since $\mathbf x_k$ and $\mathbf x_j$ are independent for $j\neq k$,  then by using Lemma~\ref{L:uncorrelated} and the fact that independence and uncorrelatedness are equivalent for Gaussian RVs, the covariance and pseudo-covariance matrices of the received signal vector $\by_k, k=1,\cdots, K,$ can be obtained as
\begin{align}
\bC_{\by_k}&=\xE(\by_k \by_k^H)=\sum_{j=1}^{K}\bH_{kj}\bC_{\bx_j}\bH_{kj}^H+\sigma^2\mathbf{I}, \label{E:Cyk} \\
\xbC_{\by_k}&=\xE(\by_k \by_k^T)=\sum_{j=1}^{K}\bH_{kj}\xbC_{\bx_j}\bH_{kj}^T, \label{E:PCyk}
\end{align}
where in \eqref{E:PCyk}, we have used the fact that the pseudo-covariance of the CSCG noise vector $\mathbf n_k$ is a zero matrix. It is obvious from \eqref{E:Cyk} that $\bC_{\by_k}$ is nonsingular. Then with the augmented covariance matrix $\lbC_{\by_k}$ defined as in \eqref{E:augC} and using the Schur complement \cite{361}, we obtain
 \begin{align}
 \big | \lbC_{\by_k} \big | =& \bigg | \begin{matrix}\bC_{\by_k} & \xbC_{\by_k} \\ \xbC_{\by_k}^* & \bC_{\by_k}^* \end{matrix} \bigg |
 =\big | \bC_{\by_k} \big | \big | \bC_{\by_k}^*-\xbC_{\by_k}^*\bC_{\by_k}^{-1}\xbC_{\by_k} \big |\notag \\
 =&\big |\bC_{\by_k} \big |^2 \big | \mathbf{I}- \bC_{\by_k}^{-1}\xbC_{\by_k}\bC_{\by_k}^{-T}\xbC_{\by_k}^{H}\big | ,
 \end{align}
 where we have used the fact that for an invertible matrix $\mathbf A$,  $\mathbf A^{-T}=(\mathbf A^{-1})^T=(\mathbf A ^T)^{-1}$; and in the last equality, we have used the identities $|\mathbf{A}|=|\mathbf{A}^T|$, $|\mathbf{A}\mathbf{B}|=|\mathbf{A}||\mathbf{B}|$, and the facts that $\bC_{\by_k}$ is Hermitian and $\xbC_{\by_k}$ is symmetric. With the transmitted signals being Gaussian, the received signal $\by_k$ is also Gaussian. Then based on Theorem~\ref{T:entropyGaussian}, the differential entropy of $\by_k\in \mathbb{C}^N$ is given by
\begin{align}
h(\by_k)&=\xlog \big((\pi e)^N \big|\bC_{\by_k} \big|\big) + \frac{1}{2}\xlog \big| \mathbf{I}- \bC_{\by_k}^{-1}\xbC_{\by_k}\bC_{\by_k}^{-T}\xbC_{\by_k}^{H}\big|.\notag
\end{align}
Denote $\bs_k$ as the interference-plus-noise term at receiver $k$, i.e., $
 \bs_k=\sum_{j\neq k} \bH_{kj}\bx_j+\bn_k$. Then the covariance and pseudo-covariance matrices of $\mathbf s_k$ are given by
\begin{align}
\bC_{\bs_k}&=\sum_{j\neq k}\bH_{kj}\bC_{\bx_j}\bH_{kj}^H+\sigma^2\mathbf{I}, \label{E:Csk}\\
\xbC_{\bs_k}&=\sum_{j\neq k}\bH_{kj}\xbC_{\bx_j}\bH_{kj}^T.\label{E:PCsk}
\end{align}
Similarly as for $\by_k$, the differential entropy of $\bs_k$ is
\begin{align}
h(\bs_k)=\xlog\big((\pi e)^N \big | \bC_{\bs_k} \big |\big)+\frac{1}{2}\xlog \big | \mathbf{I}-\bC_{\bs_k}^{-1}\xbC_{\bs_k}\bC_{\bs_k}^{-T}\xbC_{\bs_k}^H \big |.\notag
\end{align}
Under the assumption that interference is treated as Gaussian noise, the achievable rate at receiver $k$ with improper Gaussian signaling can be obtained as
\begin{align}
R_k=&I(\bx_k; \by_k)=h(\by_k)-h(\by_k|\bx_k)=h(\by_k)-h(\bs_k) \notag \\
=&\frac{1}{2} \xlog\frac{\big|\lbC_{\by_k} \big|}{\big|\lbC_{\bs_k} \big|}
=\underbrace{\xlog \frac{\big |\sigma^2\mathbf{I}+\sum_{j=1}^{K}\bH_{kj}\bC_{\bx_j}\bH_{kj}^H \big|}
{\big |\sigma^2\mathbf{I}+\sum_{j\neq k}\bH_{kj}\bC_{\bx_j}\bH_{kj}^H\big|}} _{\triangleq R_{k,\text{proper}}(\{\bC_{\bx_j}\})} \notag \\
&\qquad \qquad \qquad +\frac{1}{2}\xlog\frac{\big| \mathbf{I}- \bC_{\by_k}^{-1}\xbC_{\by_k}\bC_{\by_k}^{-T}\xbC_{\by_k}^{H}\big|}
{\big | \mathbf{I}-\bC_{\bs_k}^{-1}\xbC_{\bs_k}\bC_{\bs_k}^{-T}\xbC_{\bs_k}^H \big |}.\label{E:Rk}
\end{align}
The above equation shows that with improper Gaussian signaling, the achievable rate can be expressed as a summation of two terms. The first term, denoted by $R_{k,\text{proper}}(\{\bC_{\bx_j}\})$, is the rate achievable by the conventional proper Gaussian signaling, which is a function of the transmit covariance matrices only. The second term is a function of both the transmit covariance and pseudo-covariance matrices. By setting $\xbC_{\bx_k}=\mathbf{0}, \forall k$, the second term vanishes and \eqref{E:Rk} reduces to the rate expression for the conventional case of proper Gaussian signaling. The separability of the achievable rate by improper Gaussian signaling provides a general method to improve the achievable rate over the conventional proper Gaussian signaling, i.e., for any given covariance matrices obtained by existing proper Gaussian signaling schemes, the rate can be improved with improper Gaussian signaling by choosing the pseudo-covariance matrices that make the second term in \eqref{E:Rk} strictly positive. It is worth noting that this  property does not exist if we convert the complex-valued system in \eqref{E:yk} to an equivalent real-valued system by doubling the input/output dimensions.

In this paper, we are interested in characterizing  the achievable rate region with improper Gaussian signaling. The achievable rate region for the $K$-user MIMO-IC consists of all the rate-tuples for all users that can be simultaneously achieved under a given set of transmit power constraints for each transmitter, denoted by $P_k,\ k=1,...,K$, i.e.,
\begin{align}\label{E:region}
\mathcal{R}\triangleq \bigcup_{\begin{subarray}{l} \xTr\{\bC_{\bx_k}\}\leq P_k, \\ \lbC_{\bx_k}\succeq \mathbf{0}, \forall k\end{subarray}} \bigg\{(r_1, \cdots, r_K): 0\leq r_k \leq R_k,\forall k\bigg\},
\end{align}
where $\lbC_{\bx_k}$ is the augmented covariance matrix of $\bx_k$ defined in \eqref{E:augC}. The constraint $\lbC_{\bx_k}\succeq \mathbf{0}$ follows from Theorem~\ref{T:validPair}. 
In Sections~\ref{S:problemFormulation}-\ref{S:separate}, we will consider the transmit covariance and pseudo-covariance optimizations for achieving the Pareto boundary of the above rate region for the special case of two-user SISO-IC.
\vspace{-1.5ex}
 \subsection{Widely Linear Precoding}
 In this subsection, we consider the practical problem of how to efficiently generate the transmitted signal $\bx_k$ at each transmitter given any valid pair of covariance matrix  $\bC_{\bx_k}$ and pseudo-covariance matrix  $\xbC_{\bx_k}$, from an information-bearing signal $\bd_k$ that is selected from conventional CSCG (proper Gaussian) codebooks. Without loss of generality, we assume $\bd_k\sim \mathcal{CN}(\mathbf{0}, \mathbf{I})$; thus, we have  
\begin{align}
\bC_{\bd_k}=\mathbf{I},\ \xbC_{\bd_k}=\mathbf{0}, \ k=1,\cdots, K. \label{E:Cdj}
\end{align}
First, consider the conventional linear precoding given by
\begin{align}
\bx_k=\mathbf{U}_k\bd_k, \label{E:strictLinear}
\end{align}
where $\mathbf{U}_k$ is the precoding matrix. Then the pseudo-covariance matrix of $\bx_k$ is given by $\xbC_{\bx_k}=\mathbf{U}_k\xbC_{\bd_k}\mathbf{U}_k^T=\mathbf{0}$. This implies that the conventional linear precoding is not able  to map the proper Gaussian signal $\bd_k$ to the improper transmitted Gaussian signal $\bx_k$.

 Since the augmented covariance matrix defined in \eqref{E:augC} contains both the covariance and pseudo-covariance matrices, a necessary condition for a RV $\mathbf z_k$ to have covariance matrix $\bC_{\bx_k}$ and pseudo-covariance matrices $\xbC_{\bx_k}$ is that its augmented covariance matrix satisfies $\lbC_{\mathbf z_k}=\lbC_{\mathbf x_k}$.  This is ensured by the transformation
 \begin{align}
 \bigg[\begin{matrix} \mathbf z_k \\ \mathbf z_k^* \end{matrix}\bigg]= \lbC_{\bx_k}^{\frac{1}{2}} \bigg[\begin{matrix} \bd_k \\ \bd_k^* \end{matrix}\bigg], \label{E:widerTrans}
 \end{align}
 where $\lbC_{\bx_k}^{\frac{1}{2}}$ denotes the generalized Cholesky factor of the positive semidefinite matrix $\lbC_{\bx_k}$, which is defined by $\lbC_{\bx_k}=\lbC_{\bx_k}^{\frac{1}{2}}(\lbC_{\bx_k}^{\frac{1}{2}})^H$. Since $\lbC_{\mathbf d_k}=\mathbf I$ as given in \eqref{E:Cdj}, it is easy to verify that $\mathbf z_k$ in \eqref{E:widerTrans} satisfies $\lbC_{\mathbf z_k}=\lbC_{\mathbf x_k}$. A common method for finding $\lbC_{\bx_k}^{\frac{1}{2}}$ is via eigenvalue decomposition (EVD). Specifically, let the EVD of $\lbC_{\bx_k}$ be expressed as $\lbC_{\bx_k}=\mathbf U \mathbf D \mathbf U^H$; then $ \lbC_{\bx_k}^{\frac{1}{2}}=\mathbf U \mathbf D^{\frac{1}{2}}$ is obtained. However, it is worth pointing out that the above obtained  $\lbC_{\bx_k}^{\frac{1}{2}}$ cannot satisfy  \eqref{E:widerTrans} in general. This is because the two vectors $\mathbf z_k$ and $\mathbf z_k^*$ in \eqref{E:widerTrans} need to be complex conjugate of each other; therefore, the transformation matrix $\lbC_{\bx_k}^{\frac{1}{2}}$ should be designed  with more care than  the conventional EVD. On the other hand, if we can find one $\lbC_{\bx_k}^{\frac{1}{2}}$ such that it has the following structure:
 \begin{align}\label{E:structure}
\lbC_{\bx_k}^{\frac{1}{2}}=\bigg[\begin{matrix}\mathbf B_1 & \mathbf B_2 \\ \mathbf B_2^* & \mathbf B_1^*
\end{matrix} \bigg],
\end{align}
i.e., the upper-left  (w.r.t. upper-right) block is the complex conjugate of the lower-right  (w.r.t. lower-left) block, then \eqref{E:widerTrans} is equivalent to the following two sets of equations:
\begin{align}
& \mathbf z_k=\mathbf B_1 \mathbf d_k + \mathbf B_2 \mathbf d_k^*, \label{E:z1}\\
& \mathbf z_k^*=\mathbf B_2^* \mathbf d_k + \mathbf B_1^* \mathbf d_k^*.\label{E:z1Conj}
\end{align}
It is easy to verify that the two equations given in \eqref{E:z1} and \eqref{E:z1Conj} are consistent, i.e., \eqref{E:z1Conj} is simply obtained by taking the complex conjugate on both sides of \eqref{E:z1} and vice versa. Therefore, the remaining task is to find one $\lbC_{\bx_k}^{\frac{1}{2}}$ with the structure given by \eqref{E:structure}.  To achieve this end, we define the  following $2M\times 2M$ unitary matrix \cite{262}:
\[
\mathbf{T}\triangleq \frac{1}{\sqrt 2}\bigg[\begin{matrix} \mathbf I_M & i \mathbf I_M \\ \mathbf I_M & -i \mathbf I_M \end{matrix} \bigg], \ \mathbf T \mathbf T ^H =\mathbf T ^H \mathbf T = \mathbf I_{2M}.
\]
For any \emph{real-valued} matrix $\mathbf A=\bigg[\begin{matrix}\mathbf A_{11} & \mathbf A_{12} \\ \mathbf A_{21} & \mathbf A_{22}    \end{matrix} \bigg] \in \mathbb{R}^{2M\times 2M}$, it can be verified that the matrix $\mathbf{T}\mathbf A \mathbf{T}^H$ has the structure given in \eqref{E:structure}, i.e.,
\begin{align}\label{E:specialStruc}
\mathbf{T}\mathbf A \mathbf{T}^H=\bigg[\begin{matrix}\mathbf A_1 & \mathbf A_2 \\ \mathbf A_2^* & \mathbf A_1^*    \end{matrix} \bigg],
\end{align}
with $\mathbf A_1=\frac{1}{2}[(\mathbf A_{11}+\mathbf A_{22})+i(\mathbf A_{21}-\mathbf A_{12})]$ and $\mathbf A_2=\frac{1}{2}[(\mathbf A_{11}-\mathbf A_{22})+i(\mathbf A_{21}+\mathbf A_{12})]$.\vspace{-1ex}
\begin{theorem} \label{T:AEVD}\cite{262}
There exists one form of EVD for the augmented covariance matrix $\lbC_{\bx_k}\in \mathbb{C}^{2M \times 2M}$ defined in \eqref{E:augC}, which is given by
\begin{align}
&\lbC_{\bx_k}= (\mathbf T \mathbf V) \mathbf \Lambda (\mathbf T \mathbf V)^H,\label{E:AEVD}
\end{align}
where $\mathbf V \in \mathbb{R}^{2M \times 2M}$ is a real-valued orthogonal matrix and $\mathbf \Lambda=\mathrm{diag}\{\lambda_1, \lambda_2, \cdots, \lambda_{2M}\}$ contains the eigenvalues of $\lbC_{\bx_k}$.
\end{theorem}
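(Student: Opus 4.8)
The plan is to use the unitary matrix $\mathbf T$ to conjugate $\lbC_{\bx_k}$ into a \emph{real symmetric} matrix, diagonalize that real matrix by the spectral theorem — which automatically supplies a real orthogonal eigenvector matrix — and then push the orthogonal factor back through $\mathbf T$.

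First I would record the two structural facts about $\lbC_{\bx_k}$ that the argument uses: by its definition \eqref{E:augC} it has exactly the ``block-conjugate'' form \eqref{E:structure} (upper-left/lower-right and upper-right/lower-left are complex-conjugate pairs), and, being a valid augmented covariance matrix, it is Hermitian and positive semidefinite by Theorem~\ref{T:validPair}. Now set $\mathbf A\triangleq \mathbf T^H\lbC_{\bx_k}\mathbf T$, so that $\lbC_{\bx_k}=\mathbf T\mathbf A\mathbf T^H$ since $\mathbf T$ is unitary, and $\mathbf A$ is again Hermitian and positive semidefinite. The crux is to show that $\mathbf A$ is \emph{real-valued}. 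One clean way is the converse of identity \eqref{E:specialStruc}: the $\mathbb R$-linear map $\mathbf A\mapsto \mathbf T\mathbf A\mathbf T^H$ sends the space $\mathbb R^{2M\times 2M}$, of real dimension $(2M)^2$, injectively (by unitarity of $\mathbf T$) into the space of matrices of the form \eqref{E:structure}, which also has real dimension $(2M)^2$; hence it is a bijection onto that space, so the preimage $\mathbf A$ of the structured matrix $\lbC_{\bx_k}$ is real. A more pedestrian alternative is to perform the $2\times2$ block product directly and read off
\[
\mathbf A=\bigg[\begin{matrix}\Re\{\bC_{\bx_k}\}+\Re\{\xbC_{\bx_k}\} & \Im\{\xbC_{\bx_k}\}-\Im\{\bC_{\bx_k}\}\\ \Im\{\bC_{\bx_k}\}+\Im\{\xbC_{\bx_k}\} & \Re\{\bC_{\bx_k}\}-\Re\{\xbC_{\bx_k}\}\end{matrix}\bigg],
\]
which is visibly real (and, using that $\bC_{\bx_k}$ is Hermitian and $\xbC_{\bx_k}$ is symmetric, visibly symmetric). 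Either way, a real Hermitian matrix is symmetric, so $\mathbf A\in\mathbb S^{2M}$ with $\mathbf A\succeq\mathbf 0$.

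Second, I would invoke the spectral theorem for real symmetric matrices: there exist a real orthogonal $\mathbf V\in\mathbb R^{2M\times 2M}$ and a real diagonal $\mathbf\Lambda=\mathrm{diag}\{\lambda_1,\dots,\lambda_{2M}\}$ (with $\lambda_j\ge 0$) such that $\mathbf A=\mathbf V\mathbf\Lambda\mathbf V^T=\mathbf V\mathbf\Lambda\mathbf V^H$. Substituting back, $\lbC_{\bx_k}=\mathbf T\mathbf A\mathbf T^H=(\mathbf T\mathbf V)\mathbf\Lambda(\mathbf T\mathbf V)^H$; since $\mathbf T$ is unitary and $\mathbf V$ orthogonal, $\mathbf T\mathbf V$ is unitary, so this is a genuine EVD, and because $\lbC_{\bx_k}$ and $\mathbf A$ are unitarily similar they share the same spectrum, i.e., the $\lambda_j$'s are the eigenvalues of $\lbC_{\bx_k}$. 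This establishes \eqref{E:AEVD}. I expect the only non-mechanical point to be the verification that $\mathbf A$ is real; the rest is the spectral theorem plus bookkeeping. (As a byproduct useful downstream, $\lbC_{\bx_k}^{\frac12}=\mathbf T\mathbf V\mathbf\Lambda^{\frac12}=\mathbf T\,(\mathbf V\mathbf\Lambda^{\frac12})\,\mathbf T^H$ — after absorbing the unitary $\mathbf T^H$, which leaves $\lbC_{\bx_k}=\lbC_{\bx_k}^{\frac12}(\lbC_{\bx_k}^{\frac12})^H$ intact — then has the structure \eqref{E:structure} by \eqref{E:specialStruc}, as needed for widely linear precoding.)
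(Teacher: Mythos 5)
Your proof is correct and follows essentially the same route as the paper: conjugate $\lbC_{\bx_k}$ by the unitary $\mathbf T$, observe that $\mathbf T^H\lbC_{\bx_k}\mathbf T$ is real symmetric (your explicit block formula agrees entry-for-entry with the paper's), apply the real spectral theorem to get $\mathbf V\mathbf\Lambda\mathbf V^T$, and transform back. The dimension-count bijection argument is a nice alternative justification of realness, but it is not needed given the explicit computation; everything else, including the concluding remark on $\lbC_{\bx_k}^{1/2}$, matches the paper.
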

In fact, \eqref{E:AEVD} can be obtained by considering  $\mathbf T^H\lbC_{\bx_k}\mathbf T$, which is a \emph{real-valued} matrix  given by
\[
\mathbf T^H\lbC_{\bx_k}\mathbf T =\left[ \begin{matrix} \Re\{\bC_{\bx_k}+\xbC_{\bx_k}\} & \Im\{-\bC_{\bx_k}+\xbC_{\bx_k}\} \\ \Im\{\bC_{\bx_k}+\xbC_{\bx_k}\}& \Re\{\bC_{\bx_k}-\xbC_{\bx_k}\} \end{matrix}\right].
\]
Furthermore, since $\bC_{\bx_k}$ is Hermitian and $\xbC_{\bx_k}$ is symmetric, it can be verified that the matrix $\mathbf T^H\lbC_{\bx_k}\mathbf T$ is \emph{symmetric} as well. Therefore, its real-valued EVD can be written as
\begin{align}
\mathbf T^H\lbC_{\bx_k}\mathbf T=\mathbf V \mathbf \Lambda  \mathbf V^T.\label{E:RealEVD}
\end{align}
The EVD in \eqref{E:AEVD} is then obtained by applying a unitary transformation $\mathbf T$ to \eqref{E:RealEVD}.

For any given $\lbC_{\bx_k}\succeq \mathbf 0$, all the eigenvalues are nonnegative \cite{262}, i.e., $\lambda_l\geq 0, l=1,\cdots,2M$. Thus \eqref{E:AEVD} can be written as
\begin{align}
\lbC_{\bx_k}=\mathbf T \mathbf V \mathbf \Lambda \mathbf V^H \mathbf T^H=(\mathbf T \mathbf V \mathbf \Lambda^{1/2}\mathbf T^H)(\mathbf T \mathbf V \mathbf \Lambda^{1/2}\mathbf T^H)^H.\notag
\end{align}
Then we have
\begin{align}\label{E:CSqrt}
\lbC_{\bx_k}^{\frac{1}{2}}=\mathbf T (\mathbf V  \mathbf \Lambda ^{1/2} )\mathbf T^H=\bigg[\begin{matrix}\mathbf B_1 & \mathbf B_2 \\ \mathbf B_2^* & \mathbf B_1^*
\end{matrix} \bigg],
\end{align}
where the last equality follows from \eqref{E:specialStruc} and the fact that $\mathbf V  \mathbf \Lambda ^{1/2}$ is a real-valued matrix.

From \eqref{E:widerTrans} and \eqref{E:CSqrt},  it follows that to obtain the transmitted signal vector $\mathbf{x}_k$, which is generally improper with  the  covariance and pseudo-covariance matrices specified by $\lbC_{\bx_k}$, the following precoding needs to be applied to the proper information-bearing signal $\mathbf d_k$:
\begin{align}
\mathbf x_k=\mathbf B_1 \mathbf d_k + \mathbf B_2 \mathbf d_k^*, \label{E:widerTrans2}
\end{align}
where $\mathbf B_1$ and $\mathbf B_2$ are the corresponding blocks in $\mathbf T \mathbf V  \mathbf \Lambda ^{1/2} \mathbf T^H$ as shown in \eqref{E:CSqrt}, with $\mathbf V$ and $\mathbf \Lambda$ obtained by the particular form of  EVD in \eqref{E:AEVD}. Following similar terminologies used in existing literatures on improper signal processing such as \cite{269}, we refer to the precoding given in \eqref{E:widerTrans2} as \emph{widely linear precoding}. Note that if $\mathbf B_2 =\mathbf{0}$, which is the case when  $\lbC_{\bx_k}$ is block-diagonal (i.e., $\xbC_{\bx_k}=\mathbf 0$),  \eqref{E:widerTrans2} reduces to the conventional  linear precoding for proper Gaussian signaling given by \eqref{E:strictLinear}.
 Last, in terms of the real-valued representation, \eqref{E:widerTrans2} can be re-expressed as
\[
\bigg[\begin{matrix} \Re \{\bx_k\} \\ \Im\{\bx_k\} \end{matrix} \bigg]= \bigg[\begin{matrix}\Re\{\mathbf B_1 + \mathbf B_2\} & \Im \{\mathbf B_2 - \mathbf B_1\} \\ \Im\{\mathbf B_2 + \mathbf B_1\} & \Re\{\mathbf B_1 -\mathbf B_2\} \end{matrix}\bigg] \bigg[\begin{matrix} \Re\{\mathbf d_k\} \\ \Im\{\mathbf d_k\}\end{matrix} \bigg]. \notag
\]
\vspace{-1.5ex}
\section{Pareto Boundary Characterization for the Two-user SISO-IC}\label{S:problemFormulation}
In the remaining part of this paper, we will focus on the special  two-user SISO-IC case, with the aim of characterizing its Pareto rate boundary  with improper Gaussian signaling by optimizing both the covariances and pseudo-covariances of transmitted signals. The input-output relationship for the two-user SISO-IC can be simplified from \eqref{E:yk} as
 \begin{equation}\label{A:OSISO}
\begin{aligned}
y_1&=h_{11}x_1+h_{12}x_2+n_1, \\
y_2&=h_{21}x_1+h_{22}x_2+n_2,
\end{aligned}
\end{equation}
where $h_{kj}=|h_{kj}|e^{i \phi_{kj}}, k,j\in \{1,2\}$, is the complex scalar channel from transmitter $j$ to receiver $k$.\footnote{Since a phase rotation can be applied at each of the receivers with coherent demodulation, without loss of generality, the direct channel gains $h_{11}$ and $h_{22}$ can be assumed to be real values. However, this assumption will not change the remaining results in this paper.} Denote the covariance and pseudo-covariance of the transmitted signal $x_k$  by
\begin{align}
C_{x_k}&=\xE(x_kx_k^*), \ \xC_{x_k}=\xE(x_kx_k),\ k=1,2.
\end{align}
Note that $C_{x_k}$'s are nonnegative real numbers equal to the transmit power values of the corresponding users, while $\xC_{x_k}$'s are complex numbers in general. 
Due to Schur complement, the necessary and sufficient conditions stated in Theorem~\ref{T:validPair} for the special case of two-user SISO-IC reduce to
\begin{align}\label{E:validPair}
C_{x_k}\geq 0, \ |\xC_{x_k}|^2\leq C_{x_k}^2, \  k=1,2.
\end{align}
 The covariance and pseudo-covariance of $y_k$ can be written as
\begin{align}
C_{y_k}& =\xE(y_k y_k^*)=|h_{k1}|^2C_{x_1}+|h_{k2}|^2C_{x_2}+\sigma^2, \label{E:y1C}\\
\xC_{y_k}&=\xE(y_k y_k)=h_{k1}^2\xC_{x_1}+h_{k2}^2\xC_{x_2}. \label{E:y1PC}
\end{align}
 For the interference-plus-noise term $s_k=h_{k \bar{k}}x_{\bar{k}}+n_k$, $\bar{k}\neq  k$, we have
\begin{align}
C_{s_k}& =|h_{k \bar{k}}|^2C_{x_{\bar{k}}}+\sigma^2,\
\xC_{s_k}=h_{k \bar{k}}^2\xC_{x_{\bar{k}}}. \label{E:s1PC}
\end{align}
Then the achievable rate expression in \eqref{E:Rk} for the special case of two-user SISO-IC reduces to
\begin{align}
\hspace{-1ex}&R_k^{\text{SISO}}
=\frac{1}{2}\xlog\frac{C_{y_k}^2-|\xC_{y_k}|^2}{C_{s_k}^2-|\xC_{s_k}|^2} \label{E:Rk-SISO-JOINT}\\
&=\underbrace{\xlog \left(1+\frac{|h_{kk}|^2C_{x_k}}{\sigma^2+|h_{k \bar{k}}|^2C_{x_{\bar{k}}}}\right)}_{\triangleq R_{k,\text{proper}}^{\text{SISO}}(C_{x_1}, C_{x_2})}+\frac{1}{2}\xlog\frac{1-C_{y_k}^{-2}|\xC_{y_k}|^2}{1-C_{s_k}^{-2}|\xC_{s_k}|^2}.\label{E:Rk-SISO}
\end{align}
To characterize the Pareto boundary of the achievable rate region defined in \eqref{E:region},  we adopt the rate-profile technique proposed in \cite{240}.  Specifically, any Pareto-optimal rate-pair can be obtained by solving the following optimization problem with a given rate-profile specified  by  {\boldmath$\alpha$}$=(\alpha_1, \alpha_2)$:
\begin{align}
\text{(P1):} & \underset{\{C_{x_k}\}, \{\xC_{x_k}\}, R}{\max.}   R \notag \\
& \qquad \text{s.t.}   \quad R_k^{\text{SISO}}\geq \alpha_k R,  \ \forall k \label{C:jointIneq7} \\
& \qquad \qquad \ 0\leq C_{x_k} \leq P_k, \ \forall k\\
& \qquad \qquad \ |\xC_{x_k}|^2\leq C_{x_k}^2, \ \forall k,
\end{align}
where $\alpha_k$ denotes the target ratio between user $k$'s achievable rate and the users' sum-rate, $R$. Without loss of generality, we assume that $\alpha_1, \alpha_2 > 0$ and $\alpha_1+\alpha_2=1$. 
 Denote the optimal solution to (P1) as $R^{\star}$, then the rate-pair $( \alpha_1 R^{\star}, \alpha_2 R^{\star})$ must be on the Pareto boundary corresponding to the rate-profile given by $(\alpha_1, \alpha_2)$. Thereby, by solving (P1) with different rate-profile parameters $(\alpha_1, \alpha_2)$, the complete Pareto boundary for the achievable rate region  can be found. 
\section{Joint Covariance and Pseudo-Covariance Optimization}\label{S:jointOpt}
 In this section, by applying the SDR technique, we propose an approximate solution to the non-convex problem (P1) where the covariance and pseudo-covariance of the transmitted signals are jointly optimized. The approach of using SDR for solving non-convex quadratically constrained quadratic program (QCQP) has been successfully applied to find high-quality approximate solutions for various problems in communication and signal processing (see e.g. \cite{349} and references therein).  By treating $R$ as a slack variable and  substituting $R_k^{\text{SISO}}$ in \eqref{E:Rk-SISO-JOINT}, (P1) can be equivalently written as
 \begin{align}
 \text{(P1.1):} & \underset{\{C_{x_k}\},\{\xC_{x_k}\}}{\max.} \ \underset{k=1,2}{\min.}\   \frac{1}{2 \alpha_k} \xlog \frac{C_{y_k}^2-|\xC_{y_k}|^2}{C_{s_k}^2-|\xC_{s_k}|^2} \notag \\
&  \qquad \ \text{s.t.} \quad 0\leq C_{x_k}\leq P_k, \quad \forall k \label{C:powerSISO}\\
&\qquad \quad \quad \ |\xC_{x_k}|^2\leq C_{x_k}^2, \quad \forall k.\label{C:validPairSISO}
 \end{align}
(P1.1) is a minimum-weighted-rate maximization (MinWR-Max) problem, where the weights are related to the rate-profile $\boldsymbol \alpha$. The following result will be used for solving (P1.1).
\begin{lemma}\label{L:pos}
For any set of $\{C_{x_k}\}$ and $\{\xC_{x_k}\}$ that is feasible to (P1.1), the following inequalities hold:
\begin{align}
&C_{y_k}^2-|\xC_{y_k}|^2\geq \sigma^4>0, \quad \forall k, \label{E:pos1}\\
&C_{s_k}^2-|\xC_{s_k}|^2\geq \sigma^4>0, \quad \forall k.\label{E:pos2}
\end{align}
\end{lemma}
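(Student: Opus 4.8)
The plan is to prove each inequality by recognizing that $C_{y_k}^2 - |\xC_{y_k}|^2$ (and similarly $C_{s_k}^2 - |\xC_{s_k}|^2$) is, up to a constant factor, the determinant of the augmented covariance matrix $\lbC_{\by_k}$ (resp.\ $\lbC_{\bs_k}$), so that the claim amounts to a quantitative lower bound on these determinants. First I would note that from \eqref{E:y1C}, \eqref{E:y1PC}, \eqref{E:s1PC} and the SISO specialization of \eqref{E:augC}, the quantity $C_{y_k}^2 - |\xC_{y_k}|^2$ equals $\bigl|\lbC_{\by_k}\bigr|$ in the scalar case, and likewise $C_{s_k}^2-|\xC_{s_k}|^2 = \bigl|\lbC_{\bs_k}\bigr|$. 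Since the feasibility constraints \eqref{C:validPairSISO} together with Theorem~\ref{T:validPair} guarantee $\lbC_{\bx_k}\succeq \mathbf 0$, and $\by_k$, $\bs_k$ are obtained from the $\bx_j$'s plus independent proper noise $n_k$, both augmented matrices are positive semidefinite; in fact they are positive definite because the noise contributes $\sigma^2 \mathbf I$ to the covariance part. So $C_{y_k}^2-|\xC_{y_k}|^2>0$ and $C_{s_k}^2-|\xC_{s_k}|^2>0$ are immediate; the real content is the sharper bound $\ge \sigma^4$.

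To get the $\sigma^4$ bound, I would decompose the augmented covariance matrix of $\bs_k$ as a sum: $\lbC_{\bs_k} = \mathbf G_k + \sigma^2 \mathbf I_2$, where $\mathbf G_k$ is the augmented covariance contribution from the interfering signal $h_{k\bar k}x_{\bar k}$ alone, which is positive semidefinite by Theorem~\ref{T:validPair} applied to $x_{\bar k}$ (its augmented covariance is PSD, and congruence by the scalar $h_{k\bar k}$ preserves this). Then I invoke the elementary fact that for a $2\times 2$ positive semidefinite matrix $\mathbf G$, one has $|\mathbf G + \sigma^2 \mathbf I| \ge \sigma^4$ — this follows since $|\mathbf G + \sigma^2 \mathbf I| = \sigma^4 + \sigma^2\,\xTr(\mathbf G) + |\mathbf G| \ge \sigma^4$, using $\xTr(\mathbf G)\ge 0$ and $|\mathbf G|\ge 0$. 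This gives \eqref{E:pos2}. For \eqref{E:pos1}, the same argument applies verbatim with $\bs_k$ replaced by $\by_k$: write $\lbC_{\by_k} = \mathbf G_k' + \sigma^2\mathbf I_2$ where $\mathbf G_k'$ is the (PSD) augmented covariance contribution of $h_{kk}x_k + h_{k\bar k}x_{\bar k}$, and conclude $C_{y_k}^2 - |\xC_{y_k}|^2 = |\lbC_{\by_k}| \ge \sigma^4$.

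Alternatively, and perhaps more transparently in the scalar setting, I can argue directly: write $C_{s_k} = |h_{k\bar k}|^2 C_{x_{\bar k}} + \sigma^2 \ge \sigma^2$ and $|\xC_{s_k}| = |h_{k\bar k}|^2 |\xC_{x_{\bar k}}| \le |h_{k\bar k}|^2 C_{x_{\bar k}}$, where the last inequality uses the validity constraint $|\xC_{x_{\bar k}}| \le C_{x_{\bar k}}$. Hence $C_{s_k}^2 - |\xC_{s_k}|^2 \ge (|h_{k\bar k}|^2 C_{x_{\bar k}} + \sigma^2)^2 - (|h_{k\bar k}|^2 C_{x_{\bar k}})^2 = \sigma^4 + 2\sigma^2 |h_{k\bar k}|^2 C_{x_{\bar k}} \ge \sigma^4$, since $C_{x_{\bar k}}\ge 0$. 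The same manipulation with $C_{y_k} = |h_{kk}|^2 C_{x_k} + |h_{k\bar k}|^2 C_{x_{\bar k}} + \sigma^2$ and $|\xC_{y_k}| = |h_{kk}^2 \xC_{x_1} \text{-type sum}| \le |h_{kk}|^2 C_{x_k} + |h_{k\bar k}|^2 C_{x_{\bar k}}$ (triangle inequality plus the two validity constraints) yields $C_{y_k}^2 - |\xC_{y_k}|^2 \ge \sigma^4$ identically.

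I do not anticipate a serious obstacle here; the only point requiring a little care is the bound on $|\xC_{y_k}|$, where one must combine the triangle inequality $|h_{k1}^2\xC_{x_1} + h_{k2}^2\xC_{x_2}| \le |h_{k1}|^2|\xC_{x_1}| + |h_{k2}|^2|\xC_{x_2}|$ with the per-user feasibility bounds $|\xC_{x_j}| \le C_{x_j}$ from \eqref{C:validPairSISO}, to match it against the corresponding terms in $C_{y_k}$. Once that bookkeeping is done, both \eqref{E:pos1} and \eqref{E:pos2} drop out of an elementary difference-of-squares computation, and the strict positivity $\sigma^4 > 0$ is trivial since $\sigma^2 > 0$.
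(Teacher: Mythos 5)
Your proposal is correct; in fact it contains two valid arguments, and your ``alternative'' direct computation is essentially the paper's own proof. The paper establishes \eqref{E:pos1} by writing $|\xC_{y_k}|^2+\sigma^4\leq(|\xC_{y_k}|+\sigma^2)^2$, then bounding $|\xC_{y_k}|\leq |h_{k1}|^2|\xC_{x_1}|+|h_{k2}|^2|\xC_{x_2}|\leq |h_{k1}|^2C_{x_1}+|h_{k2}|^2C_{x_2}$ via the triangle inequality and the feasibility constraint \eqref{C:validPairSISO}, so that the last expression plus $\sigma^2$ is exactly $C_{y_k}$; your difference-of-squares version $(A+\sigma^2)^2-A^2=\sigma^4+2\sigma^2A\geq\sigma^4$ is the same computation rearranged. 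Your primary route, however, is genuinely different and more structural: it identifies $C_{y_k}^2-|\xC_{y_k}|^2$ and $C_{s_k}^2-|\xC_{s_k}|^2$ as the determinants of the $2\times 2$ augmented covariance matrices $\lbC_{y_k}$ and $\lbC_{s_k}$ (consistent with the rate expression \eqref{E:Rk-SISO-JOINT} being $\tfrac12\xlog(|\lbC_{\by_k}|/|\lbC_{\bs_k}|)$), decomposes each as a sum of PSD augmented covariances of the signal contributions plus $\sigma^2\mathbf I_2$ from the proper noise, and invokes $|\mathbf G+\sigma^2\mathbf I|=|\mathbf G|+\sigma^2\xTr(\mathbf G)+\sigma^4\geq\sigma^4$ for PSD $\mathbf G$. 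This buys generality --- it extends verbatim to the MIMO case, where the bound becomes $|\lbC_{\bs_k}|\geq\sigma^{4N}$, whereas the paper's scalar triangle-inequality manipulation does not --- at the cost of being slightly heavier machinery than needed for the two-user SISO setting. The one point to state cleanly if you write up the determinant route is that the PSD-ness of each signal contribution follows from \eqref{C:validPairSISO} (equivalently $\lbC_{x_j}\succeq\mathbf 0$ by Theorem~\ref{T:validPair}) together with invariance of positive semidefiniteness under the congruence $\mathrm{diag}(h_{kj},h_{kj}^*)\,\lbC_{x_j}\,\mathrm{diag}(h_{kj},h_{kj}^*)^H$.
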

\begin{IEEEproof}
Please refer to Appendix~\ref{A:pos}.
\end{IEEEproof}
Define the following $2$-dimensional real-valued vectors:
\begin{align}
&\mathbf c = \left[\begin{matrix}C_{x_1} & C_{x_2} \end{matrix} \right]^T,\notag \\
&\mathbf a_1=\left[\begin{matrix}|h_{11}|^2 &  |h_{12}|^2  \end{matrix} \right]^T,\ \mathbf b_1=\left[\begin{matrix}0 & |h_{12}|^2 \end{matrix} \right]^T,  \notag \\
&\mathbf a_2=\left[\begin{matrix}|h_{21}|^2 &  |h_{22}|^2  \end{matrix} \right]^T, \  \mathbf b_2=\left[\begin{matrix}|h_{21}|^2 & 0 \end{matrix} \right]^T.\notag
\end{align}
Then from \eqref{E:y1C} and \eqref{E:s1PC}, we have
\begin{align}
C_{y_k}^2=(\sigma^2+\mathbf a_k^T\mathbf c)^2, \ C_{s_k}^2=(\sigma^2+\mathbf b_k^T\mathbf c)^2, \ k=1,2.\label{E:CykReal}
\end{align}
Define the following $2$-dimensional complex-valued vectors:
\begin{align}
&\mathbf {q}=\left[\begin{matrix}\xC_{x_1} &  \xC_{x_2} \end{matrix} \right]^T,\notag \\
&\mathbf f_1=\left[\begin{matrix}h_{11}^2 & h_{12}^2  \end{matrix} \right]^H,\ \mathbf g_1=\left[\begin{matrix}0 & h_{12}^2\end{matrix} \right]^H,\notag \\
&\mathbf f_2=\left[\begin{matrix}h_{21}^2 & h_{22}^2  \end{matrix} \right]^H,\ \mathbf g_2=\left[\begin{matrix}h_{21}^2 & 0\end{matrix} \right]^H\notag.
\end{align}
 Then from \eqref{E:y1PC} and \eqref{E:s1PC}, we have
\begin{align}
&|\xC_{y_k}|^2=|\mathbf f_k^H \mathbf { q}|^2=\mathbf q^H \mathbf F_k \mathbf q, \label{E:PCykReal}\\
&|\xC_{s_k}|^2=|\mathbf g_k^H \mathbf {q}|^2=\mathbf q^H \mathbf G_k \mathbf q, \label{E:PCskReal}
\end{align}
where $\mathbf F_k =\mathbf f_k \mathbf f_k^H$ and $\mathbf G_k=\mathbf g_k \mathbf g_k^H$, $k=1,2$.
By substituting  \eqref{E:CykReal}-\eqref{E:PCskReal} into (P1.1), 
we obtain the following equivalent problem
\begin{align}
\text{(P1.2):}  \underset{\mathbf c\in \mathbb{R}^2, \mathbf q \in \mathbb{C}^{2}}{\max.}& \underset{k}{\min.}\ \frac{1}{2\alpha_k}\xlog \frac{(\sigma^2+\mathbf a_k^T\mathbf c)^2-\mathbf q^H\mathbf F_k\mathbf q}{(\sigma^2+\mathbf b_k^T\mathbf c)^2-\mathbf q^H\mathbf G_k\mathbf q}\notag \\
 \text{s.t.}\quad
 & \mathbf c^T\mathbf E_k\mathbf c \leq P_k^2, \quad \forall k \label{C:power}\\
 &  \mathbf e_k^T\mathbf c\geq 0,\quad \forall k \label{C:nonNegativePower} \\
&  \mathbf q^H \mathbf{E}_k\mathbf q\leq \mathbf c^T \mathbf{E}_k \mathbf c, \quad \forall k \label{C:pseudo}
\end{align}
where $\mathbf e_k$ is the $k$th column in the identity matrix $\mathbf I_2$,  and $\mathbf E_k=\mathbf e_k\mathbf e_k^T$. \eqref{C:power} and \eqref{C:nonNegativePower} correspond to the constraints \eqref{C:powerSISO} in (P1.1), and \eqref{C:pseudo} is equivalent  to \eqref{C:validPairSISO}.
The objective function of (P1.2) is given by the minimum of weighted log-fraction of quadratic functions over $\mathbf c$ and $\mathbf q$. Due to the noise power $\sigma^2$, the quadratics are non-homogeneous \cite{349}. 
 By introducing a new variable $t$, we obtain the homogenized quadratics \cite{349}, which yield
\begin{align}
\text{(P1.3):} \quad &  \underset{\mathbf c, \mathbf q, t}{\max.} \ \underset{k}{\min.}\ \frac{1}{2\alpha_k}\xlog \frac{(\sigma^2 t+\mathbf a_k^T\mathbf c)^2-\mathbf q^H\mathbf F_k\mathbf q}{(\sigma^2 t+\mathbf b_k^T\mathbf c)^2-\mathbf q^H\mathbf G_k\mathbf q}\notag \\
 & \quad  \text{s.t.} \quad
  \mathbf c^T \mathbf{E}_k \mathbf c \leq P_k^2, \quad \forall k\\
 & \qquad \quad \  \mathbf e_k^T\mathbf c t\geq 0, \quad \forall k \\
 & \qquad \quad \  \mathbf q^H \mathbf{E}_k\mathbf q\leq \mathbf c^T \mathbf{E}_k \mathbf c, \quad \forall k\\
 & \qquad \quad \ t^2=1.
\end{align}
(P1.3) is equivalent to (P1.2) in the sense that if it has an optimal solution $(\mathbf c^{\star}, \mathbf q^{\star}, t^{\star})$, then $(\mathbf c^{\star}/t^{\star}, \mathbf q^{\star}/t^{\star})$ is an optimal solution to (P1.2) with the same optimal value. Therefore, (P1.2) can be solved by solving (P1.3). Next, we show that the celebrated SDR technique can be applied to find an approximate solution to (P1.3). Define
\begin{align}
& \mathbf C=\left[\begin{matrix} t \\ \mathbf c \end{matrix} \right]\left[\begin{matrix}t \\ \mathbf c \end{matrix} \right]^T, \ \mathbf Q=\mathbf q \mathbf q^H,\label{E:psdMatrix}
\end{align}
\begin{align}
& \mathbf A_k=\left[\begin{matrix}\sigma^2 \\ \mathbf a_k \end{matrix} \right]\left[\begin{matrix}\sigma^2 \\ \mathbf a_k \end{matrix} \right]^T, \  \mathbf B_k=\left[\begin{matrix}\sigma^2 \\ \mathbf b_k \end{matrix} \right]\left[\begin{matrix}\sigma^2 \\ \mathbf b_k \end{matrix} \right]^T \\
& \mathbf K_k=\left[\begin{matrix}0& \frac{1}{2}\mathbf e_k^T \\ \frac{1}{2} \mathbf e_k & \mathbf 0\end{matrix} \right], \ \mathbf {\hat{E}}_k=\left[\begin{matrix}0& \mathbf 0\\  \mathbf 0 & \mathbf E_k\end{matrix} \right].
\end{align}
With the identity $\mathbf x^H\mathbf A \mathbf x=\xTr \left( \mathbf A\mathbf x \mathbf x^H\right)$, (P1.3) is recast as
\begin{align}
\text{(P1.4):} & \underset{\mathbf C\in \mathbb{S}^{3}, \mathbf Q\in \mathbb{H}^2}{\max.}  \ \underset{k}{\min.}\ \frac{1}{2\alpha_k}\xlog \frac{\xTr(\mathbf A_k \mathbf C)-\xTr(\mathbf F_k\mathbf Q)}{\xTr(\mathbf B_k \mathbf C)-\xTr(\mathbf G_k\mathbf Q)}\notag \\
& \qquad \ \text{s.t.}  \quad    \xTr(\mathbf{\hat{E}}_k \mathbf C)\leq P_k^2, \quad \forall k \label{C:powerLinear}\\
  & \qquad \quad \quad \ \ \xTr(\mathbf K_k \mathbf C)\geq 0, \quad \forall k \\
 & \qquad \quad \quad \ \ \xTr(\mathbf{E}_k\mathbf Q)\leq \xTr(\mathbf{\hat{E}}_k \mathbf C), \quad \forall k  \\
&  \qquad \quad \quad \ \ \mathbf C_{11}=1 \label{C:C11Linear}\\
&  \qquad \quad \quad \ \ \mathbf C \succeq \mathbf 0, \ \mathbf Q \succeq \mathbf 0 \label{C:positiveSemidefinite}\\
&  \qquad \quad \quad \ \ \mathrm{rank}(\mathbf C)=1, \ \mathrm{rank}(\mathbf Q)=1,\label{C:rank}
\end{align}
where $\mathbf C_{11}$ denotes the $(1,1)$-th entry of $\mathbf C$; the positive semidefinite constraints \eqref{C:positiveSemidefinite} and the rank-1 constraints \eqref{C:rank} are due to \eqref{E:psdMatrix}. With such a reformulation, the objective function of (P1.4) is now a log-fraction of linear functions of $\mathbf C$ and $\mathbf Q$, and all the constraints \eqref{C:powerLinear}-\eqref{C:C11Linear} are also   linear. With Lemma~\ref{L:pos} and the equivalence between (P1.1) and (P1.4), for any pair of $\mathbf C$ and $\mathbf Q$ that is feasible to (P1.4), we have
\begin{align}
&\xTr(\mathbf A_k \mathbf C)-\xTr(\mathbf F_k\mathbf Q)\geq \sigma^4>0, \ \forall k, \label{E:additionalConstraint1}\\
&\xTr(\mathbf B_k \mathbf C)-\xTr(\mathbf G_k\mathbf Q)\geq \sigma^4>0, \ \forall k.\label{E:additionalConstraint2}
\end{align}
The SDR problem of (P1.4) is obtained by discarding the non-convex rank-1 constraints in \eqref{C:rank}, and  including the extra constraints \eqref{E:additionalConstraint1} and \eqref{E:additionalConstraint2}, i.e.,
\begin{align}
\text{(P1.4-SDR):} &\underset{\mathbf C\in \mathbb{S}^{3}, \mathbf Q\in \mathbb{H}^2}{\max.} \underset{k}{\min.}\ \frac{1}{2\alpha_k}\xlog \frac{\xTr(\mathbf A_k \mathbf C)-\xTr(\mathbf F_k\mathbf Q)}{\xTr(\mathbf B_k \mathbf C)-\xTr(\mathbf G_k\mathbf Q)}\notag \\
 & \qquad \ \text{s.t.} \quad
\eqref{C:powerLinear}-\eqref{C:positiveSemidefinite}, \
 \eqref{E:additionalConstraint1},\  \eqref{E:additionalConstraint2}. \notag
\end{align}
Note that although the constraints \eqref{E:additionalConstraint1} and \eqref{E:additionalConstraint2} are redundant in the rank-constrained problem (P1.4), in the rank-relaxed problem (P1.4-SDR), the advantages of including them  are twofold. First, it makes (P1.4-SDR) a problem with less relaxation to (P1.4). Besides, the strict positivity of \eqref{E:additionalConstraint2} makes (P1.4-SDR) a quasi-convex problem and hence can be solved with the standard bisection method \cite{202}. Since any feasible solution of (P1.4) is feasible for (P1.4-SDR), the optimal objective value of (P1.4-SDR) provides an upper bound on that of (P1.4). To solve the quasi-convex problem (P1.4-SDR),  consider the following feasibility problem for a fixed $R$:
\begin{align}
\text{(P1.5):} \ \text{find} & \ \mathbf C\in \mathbb{S}^{3}, \mathbf Q\in \mathbb{H}^2 \notag \\
\text{s.t.} &  \     \eqref{C:powerLinear}-\eqref{C:positiveSemidefinite}, \  \eqref{E:additionalConstraint1}, \ \eqref{E:additionalConstraint2}, \notag \\
&\ \xTr(\mathbf A_k \mathbf C)-\xTr(\mathbf F_k\mathbf Q) \notag \\
& \ \  \geq e^{2\alpha_kR}\left(\xTr(\mathbf B_k \mathbf C)-\xTr(\mathbf G_k\mathbf Q)\right),\ \forall k \label{C:equalConstraint}.
\end{align}
(P1.5) is a SDP problem, which can be efficiently solved  \cite{202}. If (P1.5) is feasible, then the optimal objective value $R_{\text{sdr}}$  of (P1.4-SDR) satisfies $R_{\text{sdr}}\geq R$; otherwise $R_{\text{sdr}}< R$. Therefore, (P1.4-SDR) can be solved by solving the SDP problem (P1.5) together with a bisection search over $R$.

Denote the solution to (P1.4-SDR) by $(\mathbf C^{\star}, \mathbf Q^{\star})$. If $\mathrm{rank}(\mathbf C^{\star})=1$ and $\mathrm{rank}(\mathbf Q^{\star})=1$, then $(\mathbf C^{\star}, \mathbf Q^{\star})$ is also the optimal solution to the rank-constrained problem (P1.4). In this case, SDR is tight and the optimal solution to (P1.3) is given by the principal components of $\mathbf C^{\star}$ and $\mathbf Q^{\star}$, from which the solution to (P1.2) can be obtained; otherwise, we apply the following Gaussian randomization procedure  customized to our problem to obtain an approximate solution to (P1.2)~\cite{349}.
\begin{algorithm}[H]
\caption{Gaussian Randomization Procedure for (P1.2)}
\label{A:randomizationJoint}
\textbf{Input:} The solution $(\mathbf C^{\star}, \mathbf Q^{\star})$ to (P1.4-SDR) and the number of randomization trials $L$.
\begin{algorithmic}[1]
\FOR{$l=1,\cdots, L$}
 \STATE Generate $\bigg[\begin{matrix} t_{l} \\ \boldsymbol\xi_{l} \end{matrix} \bigg]\sim \mathcal{N}(\mathbf 0, \mathbf C^{\star})$, $\boldsymbol\beta_{l}\sim \mathcal{CN}(\mathbf 0, \mathbf Q^{\star})$.
 \STATE Let $\mathbf c_{l}=\boldsymbol\xi_{l}/t_{l}$, and $\mathbf q_{l}=\boldsymbol\beta_{l}/t_{l}$.
  \STATE Construct a feasible point $(\mathbf{\check c}_{l}, \mathbf{\check q}_{l})$ for (P1.2) as follows:
  \begin{align}
   &[\mathbf{\check c}_{l}]_k=\max \left(0, \min\left([\mathbf{c}_{l}]_k, P_k\right)\right), \label{E:projPower}\\
  &[\mathbf{\check q}_{l}]_k=\eta_k [\mathbf q_{l}]_k, \label{E:projPseudo}\\
  &\text{where}  \  \eta_k =\min \left\{1, \frac{[\mathbf{\check c}_{l}]_k}{|[\mathbf{q}_{l}]_k|}\right\},\ k=1,2.\notag
  \end{align}
\ENDFOR
\STATE Let $(\mathbf{\hat c}, \mathbf {\hat q})$ be the solution to
\begin{align} \underset{\mathbf{\check c}_{l}, \mathbf{\check q}_{l}, l=1,\cdots,L}{\max.}
\underset{k}{\min.} \frac{1}{2\alpha_k}\xlog \frac{(\sigma^2+\mathbf a_k^T \mathbf{\check c}_{l})^2-\mathbf {\check q}_l^{H}\mathbf F_k \mathbf{\check q}_{l}}{(\sigma^2+\mathbf b_k^T \mathbf{\check  c}_{l})^2-\mathbf {\check q}_l^{H}\mathbf G_k \mathbf {\check q}_{l}}\notag
\end{align}
\end{algorithmic}
\textbf{Output:} $(\mathbf{\hat c}, \mathbf {\hat q})$ as an approximate solution for (P1.2).
\end{algorithm}
\section{Separate Covariance and Pseudo-Covariance Optimization}\label{S:separate}
For the algorithm proposed in the preceding section, although joint optimizations are performed over the covariances and pseudo-covariances, it is not clear whether a rate gain over
conventional proper Gaussian signaling is attainable since the
obtained solutions are not necessarily globally optimal. In this
section, by utilizing the result that the user's achievable
rate is separable as shown in \eqref{E:Rk-SISO}, we propose a separate
covariance and pseudo-covariance optimization algorithm for
(P1). Specifically, the covariances of the transmitted signals
are first optimized by setting the pseudo-covariances to be
zero, i.e., proper Gaussian signaling is applied. Then,
the pseudo-covariances are optimized with the covariances
fixed as the previously optimized  values. With such a separation
 approach, the obtained improper signaling scheme is guaranteed to improve the rate over proper Gaussian signaling scheme.
\subsection{Covariance Optimization}
When restricted to proper Gaussian signaling with $\xC_{x_1}=0$ and $\xC_{x_2}=0$, by substituting \eqref{E:Rk-SISO} into \eqref{C:jointIneq7}, (P1) reduces to
\begin{align}
\text{(P1.6):}\quad \underset{r,C_{x_1},C_{x_2}}{\max.}  & \quad r \notag \\
\text{s.t.}\quad & \xlog \Big(1+\frac{|h_{11}|^2C_{x_1}}{\sigma^2+|h_{12}|^2C_{x_2}}\Big)\geq  \alpha_1 r,  \notag \\ 
& \xlog \Big(1+\frac{|h_{22}|^2C_{x_2}}{\sigma^2+|h_{21}|^2C_{x_1}}\Big)\geq \alpha_2 r, \notag \\ 
&0\leq C_{x_1} \leq P_1, \ 0\leq C_{x_2}\leq P_2. \notag
\end{align}
For any fixed value $r$, (P1.6) can be transformed to the following feasibility problem:
\begin{align}
\text{(P1.7):} \quad \text{Find} \quad &   C_{x_1}\in \mathbb R,\ C_{x_2} \in \mathbb R \notag \\
\text{s.t.} \quad &   |h_{11}|^2C_{x_1}\geq (\sigma^2+|h_{12}|^2C_{x_2})(e^{\alpha_1 r}-1),  \notag \\
& |h_{22}|^2C_{x_2}\geq (\sigma^2+|h_{21}|^2C_{x_1})(e^{\alpha_2 r}-1), \notag \\
&0\leq C_{x_1} \leq P_1, \ 0\leq C_{x_2}\leq P_2. \notag
\end{align}
(P1.7) is a linear programming (LP) problem, which can be efficiently solved \cite{202}.
 If $r$ is feasible to (P1.7), then the optimal value of (P1.6) satisfies $r^{\star}\geq r$; otherwise, $r^{\star} < r$. Thus, (P1.6) can be efficiently solved by solving (P1.7) together with the bisection method for updating $r$.
\subsection{Pseudo-Covariance Optimization}
Denote the optimal solution to the covariance optimization problem (P1.6) as $\{r^{\star}, C_{x_1}^{\star}, C_{x_2}^{\star}\}$. By fixing the covariances  as $C_{x_1}^{\star}$ and $C_{x_2}^{\star}$,  (P1) is then optimized over the pseudo-covariances $\xC_{x_1}$ and $\xC_{x_2}$. By substituting the first term in the rate expression \eqref{E:Rk-SISO} with $\alpha_k r^{\star}$, the problem for pseudo-covariance optimization is formulated as
\begin{align}
\text{(P1.8):} \quad \underset{\xC_{x_1},\xC_{x_2}, R}{\max.}& \quad    R \notag \\
\text{s.t.}  \quad &   \alpha_1 r^{\star}+\frac{1}{2}\xlog\frac{1-C_{y_1}^{-2}|\xC_{y_1}|^2}{1-C_{s_1}^{-2}|\xC_{s_1}|^2}\geq \alpha_1 R , \notag \\
& \alpha_2 r^{\star}+\frac{1}{2}\xlog\frac{1-C_{y_2}^{-2}|\xC_{y_2}|^2}{1-C_{s_2}^{-2}|\xC_{s_2}|^2}\geq \alpha_2 R,  \notag \\
& |\xC_{x_1}|^2\leq C_{x_1}^{\star 2}, \  |\xC_{x_2}|^2\leq C_{x_2}^{\star 2} \notag,
\end{align}
where $C_{y_1}, C_{s_1}, C_{y_2}$ and $C_{s_2}$ are the corresponding covariance terms with the transmit covariances $C_{x_1}^{\star}$ and $C_{x_2}^{\star}$.
Again, if a given $R$ is achievable for certain pair of $\xC_{x_1}$ and $\xC_{x_2}$, then the optimal value of (P1.8) satisfies $R^{\star}\geq R$; otherwise, $R^{\star} < R$. Therefore, (P1.8) can be solved via solving a set of feasibility problems together with the bisection method. It can be easily observed that $\{\xC_{x_1}=0, \xC_{x_2}=0, R=r^{\star}\}$ is feasible to (P1.8). Therefore,  $R^{\star}\geq r^{\star}$ is satisfied, i.e., with our proposed separate covariance and pseudo-covariance optimizations, the users' sum-rate corresponding to the rate-profile given by $(\alpha_1, \alpha_2)$ with improper Gaussian signaling is guaranteed to be no smaller than that obtained with the optimal proper Gaussian signaling obtained by solving (P1.6).
Therefore, the remaining problem to be solved is the feasibility problem resulting from (P1.8) for a given $R$. By substituting $\xC_{y_k}$ in \eqref{E:y1PC} and $\xC_{s_k}$ in \eqref{E:s1PC}  into (P1.8) and after some  manipulations, the feasibility problem for a given $R$ can be formulated as
\begin{align}
\text{(P1.9):}\  \text{Find} \quad & \xC_{x_1}\in \mathbb C,\ \xC_{x_2}\in \mathbb C \notag \\
\text{s.t.} \quad & a_1|h_{11}^2\xC_{x_1}+h_{12}^2\xC_{x_2}|^2+b_1\leq |\xC_{x_2}|^2,\label{C:ineq1} \\
& a_2|h_{21}^2\xC_{x_1}+h_{22}^2\xC_{x_2}|^2+b_2\leq |\xC_{x_1}|^2, \label{C:ineq2} \\
& |\xC_{x_1}|^2\leq C_{x_1}^{\star 2}, \label{C:ineq3} \\
 & |\xC_{x_2}|^2\leq C_{x_2}^{\star 2},\label{C:ineq4}
\end{align}
where $a_k=\frac{C_{s_k}^2}{\beta_k C_{y_k}^2 |h_{k\bar{k}}|^4}, b_k=\frac{(1-1/\beta_k)C_{s_k}^2}{|h_{k\bar{k}}|^4}, \beta_k=e^{2\alpha_k(R-r^{\star})}$, $k=1,2, \bar{k}\neq k$.
Since the optimal value of (P1.8) satisfies $R^{\star}\geq r^{\star}$, we can assume that $R\geq r^{\star}$ in (P1.9) without loss of optimality. Then it follows that $\beta_k\geq 1,  b_k\geq 0, \forall k$. In the following, we show that (P1.9) can be efficiently solved via solving a finite number of SOCP problems. First, it can be verified that if $\{\xC_{x_1}, \xC_{x_2}\}$ is feasible for (P1.9), then so is $\{\xC_{x_1}e^{i\omega}, \xC_{x_2}e^{i\omega}\}$. Therefore, without loss of generality, we may choose the common phase rotation $\omega$ so that $\xC_{x_1}$ is real and nonnegative.  Denote the magnitude and phase of $\xC_{x_2}$ by $t$ and $\theta$, respectively, i.e., $\xC_{x_2}=te^{i\theta}$. Then for any fixed value of $\theta$, (P1.9) can be transformed into a SOCP feasibility problem given by
\begin{align}
\text{(P1.10):} \quad \text{Find} \quad & \xC_{x_1}\in \mathbb R, \ t\in \mathbb R \notag \\
\text{s.t.} \quad & \bigg\|\begin{matrix} \sqrt{a_1}(h_{11}^2 \xC_{x_1}+h_{12}^2te^{i\theta}) \\ \sqrt{b_1} \end{matrix} \bigg\| \leq t, \notag \\
& \bigg\|\begin{matrix} \sqrt{a_2}(h_{21}^2\xC_{x_1}+h_{22}^2te^{i\theta}) \\ \sqrt{b_2} \end{matrix} \bigg\| \leq \xC_{x_1}, \notag \\
&\xC_{x_1} \leq C_{x_1}^{\star}, \  t\leq C_{x_2}^{\star}. \notag
\end{align}

\begin{theorem}\label{T:thetaRegion}
The feasibility problem (P1.9) can be optimally solved by solving a finite number of  SOCP problems (P1.10), each for a fixed value $\theta$, where $\theta$ can be restricted to the following discrete set:
\begin{align}
\theta \in \{\pi+2(\phi_{11}-\phi_{12}), \pi+2(\phi_{21}-\phi_{22})\}\cup \Theta_{\mathcal{A}}\cup \Theta_{\mathcal{B}}, \notag
\end{align}
where $\Theta_{\mathcal{A}}$ and $\Theta_{\mathcal{B}}$ are the solution sets for $\theta$ to the following two sets of equations with variables $(\theta, t)$ and $(\theta, \xC_{x_1})$, respectively:
 \begin{align}
 \Theta_{\mathcal{A}}:
\begin{cases}\label{E:thetaA}
a_1|h_{11}^2C_{x_1}^{\star}+h_{12}^2te^{i\theta}|^2+b_1=t^2 \\
a_2|h_{21}^2C_{x_1}^{\star}+h_{22}^2te^{i\theta}|^2+b_2=C_{x_1}^{\star 2}\\
\end{cases}
\end{align}
\begin{align}
\Theta_{\mathcal{B}}:
\begin{cases}\label{E:thetaB}
a_1|h_{11}^2\xC_{x_1}+h_{12}^2C_{x_2}^{\star}e^{i\theta}|^2+b_1=C_{x_2}^{\star 2} \\
a_2|h_{21}^2\xC_{x_1}+h_{22}^2C_{x_2}^{\star}e^{i\theta}|^2+b_2=\xC_{x_1}^{ 2}\\
\end{cases}
\end{align}
\end{theorem}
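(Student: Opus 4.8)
\emph{Proof plan.} The claim has two halves. One direction is immediate: if (P1.10) is feasible for some $\theta$ in the set $S:=\{\theta_1,\theta_2\}\cup\Theta_{\mathcal A}\cup\Theta_{\mathcal B}$, where $\theta_1:=\pi+2(\phi_{11}-\phi_{12})$ and $\theta_2:=\pi+2(\phi_{21}-\phi_{22})$, then (P1.9) is feasible, because (P1.10) is just (P1.9) with the phase of $\xC_{x_2}$ frozen to $\theta$ and the common-phase freedom spent to make $\xC_{x_1}$ real and nonnegative. I would also note that $S$ is finite and effectively computable: after the substitution $e^{i\theta}=\cos\theta+i\sin\theta$, each of \eqref{E:thetaA} and \eqref{E:thetaB} becomes a system of two polynomial equations in two real unknowns, hence has finitely many solutions. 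So the content is the converse: \emph{if (P1.9) is feasible, some $\theta\in S$ makes (P1.10) feasible.}

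Using the phase invariance observed just before the theorem, I would reduce (P1.9) to the question of which $\theta$ make (P1.10) feasible; call this set $\Theta^{\sharp}$, which is compact since it is the projection onto $\theta$ of the compact feasible set of the lifted problem in $(\xC_{x_1},t,\theta)$. If $\Theta^{\sharp}$ is the whole circle, then $\theta_1\in\Theta^{\sharp}\cap S$ and we are done, so assume $\Theta^{\sharp}$ is a proper compact subset. I would then prove a \textbf{scaling lemma}: if $(\xC_{x_1},t)$ is feasible for (P1.10) at some $\theta$ with both power constraints $\xC_{x_1}\le C_{x_1}^{\star}$, $t\le C_{x_2}^{\star}$ inactive, then $\lambda(\xC_{x_1},t)$ is still feasible for $\lambda>1$ close to $1$; this uses only $b_1,b_2\ge0$ (established before the theorem) together with the fact that the quadratic parts of \eqref{C:ineq1} and \eqref{C:ineq2} are homogeneous of degree two, so scaling up only weakens these two SOC constraints. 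Hence, wherever convenient, one may assume a power constraint is active.

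For the core step, let $g_1:=a_1|h_{11}^2\xC_{x_1}+h_{12}^2te^{i\theta}|^2+b_1-t^2$ and $g_2:=a_2|h_{21}^2\xC_{x_1}+h_{22}^2te^{i\theta}|^2+b_2-\xC_{x_1}^2$ denote the slacks in \eqref{C:ineq1}, \eqref{C:ineq2} (with $\xC_{x_1}$ real). Pick $\theta^{\star}$ in the topological boundary of $\Theta^{\sharp}$, so (P1.10) is feasible at $\theta^{\star}$ but infeasible for values of $\theta$ arbitrarily close to $\theta^{\star}$. Each (P1.10) is a convex (SOCP) feasibility problem, so its solution set $\mathcal F_{\theta^{\star}}$ in the $(\xC_{x_1},t)$-plane is convex; because feasibility does not persist on a neighbourhood of $\theta^{\star}$, $\mathcal F_{\theta^{\star}}$ has empty interior, hence is a point or a segment, and at an extreme point of it the active constraints already pin it down locally. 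I would run a case analysis over the active pattern, using (i) the scaling lemma, and (ii) the curvature signs $a_1|h_{12}|^4\le1$ and $a_2|h_{21}|^4\le1$ — which follow from $C_{s_k}\le C_{y_k}$ and $\beta_k\ge1$ in the style of Lemma~\ref{L:pos} — so that $g_1$ is convex in $\xC_{x_1}$ and concave in $t$, and symmetrically for $g_2$. The surviving cases should be: exactly one of \eqref{C:ineq1}/\eqref{C:ineq2} active while the $\theta$-derivative of the corresponding quadratic vanishes, which forces $\theta^{\star}\in\{\theta_1,\theta_2\}$ (any other lone-active configuration reopens a feasible $\theta$-neighbourhood by wiggling $\theta$, and the configuration with a single power constraint also active forces a gradient of $g_1$ or $g_2$ to vanish by the curvature sign, again reopening a neighbourhood unless $\theta^{\star}\in\{\theta_1,\theta_2\}$); or both \eqref{C:ineq1} and \eqref{C:ineq2} active together with a power constraint active (both powers slack being excluded by the scaling lemma), i.e. $\theta^{\star}\in\Theta_{\mathcal A}$ if \eqref{C:ineq3} is tight and $\theta^{\star}\in\Theta_{\mathcal B}$ if \eqref{C:ineq4} is tight. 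Either way $\theta^{\star}\in S$ and (P1.10) is feasible there.

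The main obstacle is precisely this last case analysis: one must enumerate which pairs or triples of \eqref{C:ineq1}--\eqref{C:ineq4} can be simultaneously active at a boundary value $\theta^{\star}$ while the convex feasible set degenerates, and, for each ``spurious'' pattern, exhibit an explicit feasibility-preserving perturbation (of $\theta$, or of $(\xC_{x_1},t)$ along the active manifold) that contradicts $\theta^{\star}$ lying on the boundary of $\Theta^{\sharp}$. The curvature inequalities and the scaling lemma are the levers that collapse each spurious case; what is delicate is keeping the signs straight and disposing of the degenerate sub-cases $t=0$, $\xC_{x_1}=0$ or $b_k=0$ (the last being $R=r^{\star}$, where $\xC_{x_1}=\xC_{x_2}=0$ is trivially feasible for every $\theta$).
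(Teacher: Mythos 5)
Your plan shares its first half with the paper --- your ``scaling lemma'' is precisely the paper's Proposition~\ref{Pr:active}, proved by the identical $\tau$-scaling argument using $b_k\ge 0$ and the degree-two homogeneity of the quadratic parts of \eqref{C:ineq1}--\eqref{C:ineq2} --- but the second half takes a genuinely different route. The paper does not argue topologically on the boundary of the feasible-phase set $\Theta^{\sharp}$; it writes down the Lagrangian of (P1.9) and invokes the KKT conditions. The stationarity equations \eqref{E:X2X1}--\eqref{E:X2X1Another} express $X_2$ as a nonpositive real multiple of $V_1=a_1h_{12}^{2*}h_{11}^{2}$ (resp.\ $V_2=a_2h_{22}^{2*}h_{21}^{2}$) once the complementary-slackness pattern is fixed, so the four cases on which of \eqref{C:ineq1}, \eqref{C:ineq2} are active each resolve in a line or two: one active constraint yields $\theta=\pi+2(\phi_{11}-\phi_{12})$ or $\pi+2(\phi_{21}-\phi_{22})$ directly from the phase of $-V_1$ or $-V_2$; neither active is dismissed via Proposition~\ref{Pr:active}; both active, combined with Proposition~\ref{Pr:active}, gives \eqref{E:thetaA} or \eqref{E:thetaB}. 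What the KKT route buys is that the \emph{anti-phase} alignment (rather than the in-phase stationary point $\theta=2(\phi_{11}-\phi_{12})$, which your $\theta$-derivative condition also admits) falls out automatically from the signs of the multipliers, and the mixed active patterns never need to be enumerated separately.

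The gap in your proposal is the one you name yourself: the case analysis at a boundary point $\theta^{\star}$ of $\Theta^{\sharp}$ is only asserted, not executed, and that enumeration \emph{is} the content of the theorem. Two specific points would need to be nailed down. First, the inference ``feasibility does not persist near $\theta^{\star}$, hence $\mathcal F_{\theta^{\star}}$ has empty interior'' requires the extra observation that an interior point of $\mathcal F_{\theta^{\star}}$ is automatically a Slater point (this does hold here, because $g_1$ and $g_2$ are quadratics whose Hessians have a strictly positive eigenvalue, so they cannot attain a local maximum value of zero at an interior point of their sublevel sets); without it the contrapositive does not close. Second, for each ``spurious'' active pattern (e.g.\ only $g_1=0$ together with only $t=C_{x_2}^{\star}$ active, or the in-phase stationary $\theta$) you must exhibit a joint perturbation of $(\theta,\xC_{x_1},t)$ that restores feasibility, and these perturbations interact with the curvature signs in ways that are easy to get wrong. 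The approach is viable, but as written it is a program rather than a proof; the paper's Lagrangian computation is the device that makes the same case analysis mechanical.
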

\begin{IEEEproof}
Please refer to Appendix~\ref{A:thetaRegion}.
\end{IEEEproof}
Theorem~\ref{T:thetaRegion} can be intuitively explained as follows. For the feasibility problem (P1.9), if the constraint \eqref{C:ineq1} is more ``restrictive'' than \eqref{C:ineq2}, then $\theta$ should have a value such that the left hand side (LHS) of \eqref{C:ineq1} is minimized. This corresponds to $\theta=\pi+2(\phi_{11}-\phi_{12})$ so that $h_{11}^2\xC_{x_1}$ and $h_{12}^2\xC_{x_2}$ are antiphase. Similar argument for $\theta=\pi+2(\phi_{21}-\phi_{22})$ can be made. On the other hand, if both \eqref{C:ineq1} and \eqref{C:ineq2} are equally ``restrictive'', a feasible solution tends to make both constraints satisfied with equality, as given by \eqref{E:thetaA} and \eqref{E:thetaB}. $\Theta_{\mathcal{A}}$ and $\Theta_{\mathcal{B}}$ correspond to the cases where either the constraint \eqref{C:ineq3} or \eqref{C:ineq4} is active, which can be assumed without loss of generality as shown by Proposition~\ref{Pr:active} in Appendix~\ref{A:thetaRegion}.
The elements in $\Theta_{\mathcal{A}}$ and $\Theta_{\mathcal{B}}$ can be efficiently  obtained by following the steps in Appendix~\ref{A:thetas}.
\begin{figure}
\centering
\includegraphics[width=3in, height=2in]{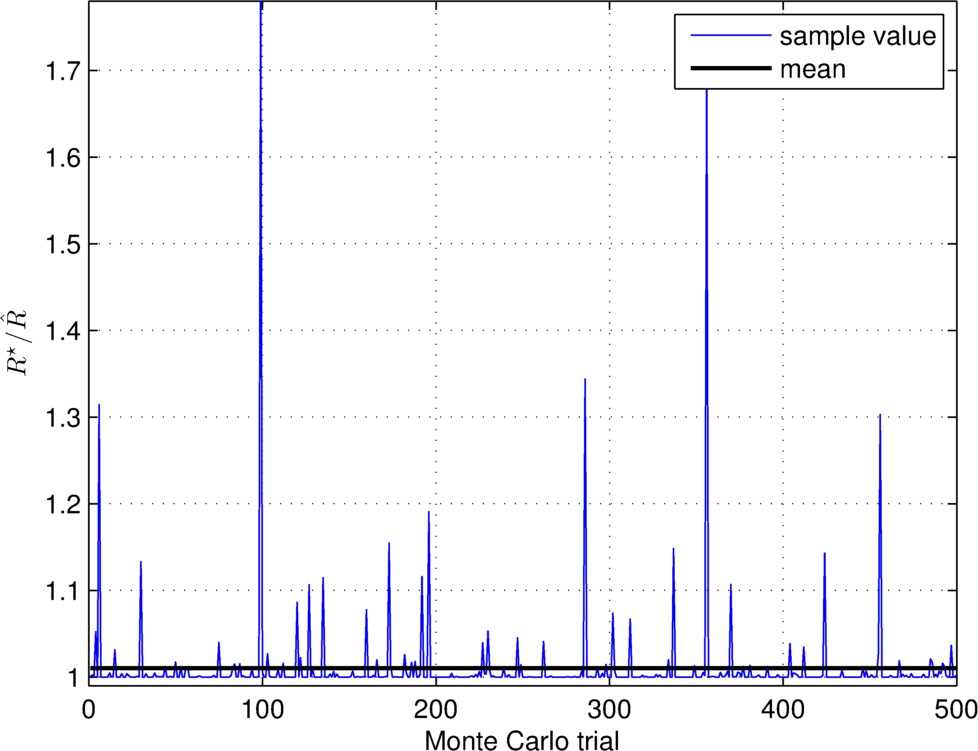}
\caption{Empirical ratio $R^{\star}/\hat R$ for the two-user SISO-IC over $500$ random channel realizations, and with SNR=$0$ dB.}\label{F:ratioRstarOverRhat0dB_K2}
\end{figure}
\section{Numerical Results}\label{S:simulation}
In this section, we evaluate the performance of the proposed algorithms for the two-user SISO-IC with numerical examples. Both transmitters are assumed to have the same power constraint $P$, i.e., $P_1=P_2=P$. SNR is defined as $P/\sigma^2$. For the SDR-based joint covariance and pseudo-covariance optimization algorithm, $L=1000$ is used for the Gaussian randomization procedure in Algorithm~\ref{A:randomizationJoint}.
\begin{figure}
\centering
\includegraphics[width=3in, height=2in]{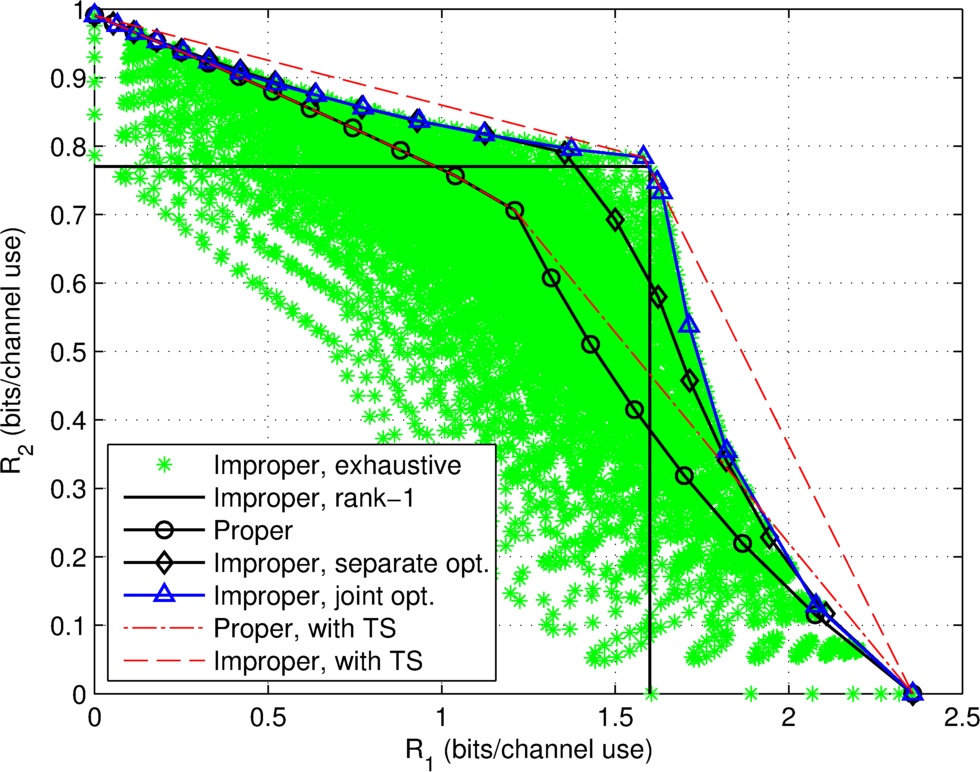}
\caption{Achievable rate region for the two-user SISO-IC with channel realization $\mathbf{H}^{(1)}$, and SNR = 0 dB.\vspace{-2ex}}\label{F:region0dBChannel1}
\end{figure}
\begin{figure}
\centering
\includegraphics[width=3in, height=2in]{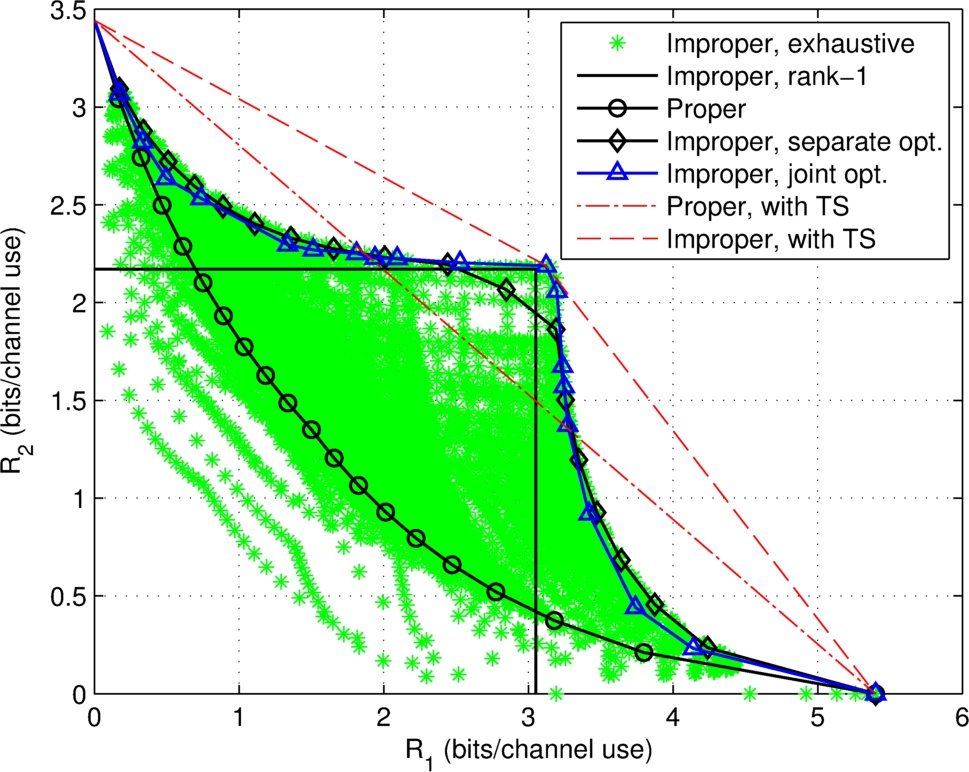}
\caption{Achievable rate region for the two-user SISO-IC with channel realization $\mathbf{H}^{(1)}$, and SNR = 10 dB.\vspace{-2ex}}\label{F:region10dBChannel1}
\end{figure}
\subsection{Approximation Ratio for SDR}
In this subsection, we evaluate the quality of the approximate solution obtained by the SDR-based joint covariance and pseudo-covariance optimization algorithm proposed in Section~\ref{S:jointOpt}. Denote $R^{\star}$  and $R_{\text{sdr}}$ as the optimal objective values of (P1.4) and its relaxation (P1.4-SDR), respectively. Further denote $\hat{R}$ as the objective value of (P1.2) corresponding to the approximate solution obtained by Algorithm~\ref{A:randomizationJoint}. Then
\begin{align}
\hat{R}\leq R^{\star}\leq R_{\text{sdr}}, \text{ or } 1\leq R^{\star}/\hat{R}\leq R_{\text{sdr}}/\hat{R},
\end{align}
where $R^{\star}/\hat{R}$ is the approximation ratio. Since in general  the optimal value $R^{\star}$ is difficult to be found, the upper bound $R_{\text{sdr}}/{\hat R}$ of the approximation ratio is usually used to evaluate the quality of the obtained approximate solutions~\cite{349}. Fortunately, for the two-user SISO-IC considered herein, the optimal value $R^{\star}$ of (P1.1) and hence that of its equivalent problem (P1.4)) can be obtained by the exhaustive search method proposed in \cite{256}.      
With the rate-profile {\boldmath$\alpha$} setting to $(1/2, 1/2)$, the empirical ratios of $R^{\star}/\hat R$ over $500$ random channel realizations at SNR= $0$ dB are plotted in  Fig.~\ref{F:ratioRstarOverRhat0dB_K2}, where the channel coefficients are generated from i.i.d. CSCG random variables with zero-mean and unit-variance. It is found that the mean of $R^{\star}/\hat R$ is $1.01$, which demonstrates the  high quality of the approximate solution obtained by the SDR-based joint covariance and pseudo-covariance optimization algorithm.
\subsection{Rate Region Comparison}
In Fig.~\ref{F:region0dBChannel1} and Fig.~\ref{F:region10dBChannel1}, the achievable rate regions for an example two-user SISO-IC are plotted for SNR=$0$ dB and  $10$ dB, respectively. The channel matrix for both plots is given by $\mathbf{H}^{(1)}=\left[\begin{matrix} h_{11} & h_{12}\\
  h_{21}  & h_{22} \end{matrix} \right]=\left[\begin{matrix} 2.0310e^{-i0.6858} & 1.4766e^{i 2.6452}\\
     0.7280e^{i 1.9726} &  0.9935 e^{-i 0.6676}\end{matrix} \right]$. The proposed improper Gaussian signaling schemes with joint and separate  covariance and pseudo-covariance optimizations  are compared with other existing schemes, including the optimal proper Gaussian signaling scheme by solving (P1.6), the  optimal improper Gaussian signaling obtained by the exhaustive search method \cite{256}, and the  rank-1 improper Gaussian signaling scheme \cite{255}. 
   Both figures reveal that for the given channel $\mathbf{H}^{(1)}$,  the achievable rate regions are significantly enlarged with improper Gaussian signaling over the conventional proper Gaussian signaling. The plots also demonstrate that the SDR-based joint covariance and pseudo-covariance optimization algorithm yields almost the  optimal rates given by the exhaustive search, which is consistent with the observation in Fig.~\ref{F:ratioRstarOverRhat0dB_K2}. Moreover, it is observed that the separate covariance and pseudo-covariance optimization algorithm performs close to the optimal solution, and also always outperforms the optimal proper signaling. It is worth remarking that, even with time-sharing (TS),\footnote{The achievable rate region with TS is obtained by taking the convex-hull operation over all the achievable rate-pairs given in \eqref{E:region}.} improper Gaussian signaling still outperforms proper Gaussian signaling, as shown by the dashed lines in the two figures. For this particular channel realization, the Pareto boundary points of the achievable rate region with TS using improper Gaussian signaling can be obtained by the TS between the two single-user maximum rate points, and the largest rate corner point by the existing rank-1 scheme \cite{255}. However, this is not always the case, as illustrated by the next example.

Next, consider a two-user SISO-IC channel given by $\mathbf{H}^{(2)}=\left[\begin{matrix} 4.0e^{i 1.7730} &  0.90e^{i  1.6744}\\
   0.80e^{i  0.6249}  &  1.50e^{i  2.1057} \end{matrix} \right]$. 
   It is observed from Fig.~\ref{F:region0dBChannel2} that for this particular channel realization at SNR=$0$ dB, there is no notable  performance gain by using improper Gaussian signaling over proper signaling, which is in contrast to that observed in Fig.~\ref{F:region0dBChannel1} with channel $\mathbf H^{(1)}$. This is mainly due to the relatively weaker interfering link in this channel setup. For example, the interference-to-signal power gain ratio at user 1's receiver is given by $|h_{12}|^2/|h_{11}|^2=0.051$, which is much smaller than $0.53$  in channel $\mathbf H^{(1)}$.  Note that intuitively, it is the non-negligible mutual interference among the users that is exploited by improper Gaussian signaling to outperform the conventional proper Gaussian signaling.\footnote{Consider the extreme case where all the interfering link gains vanish to zero and the SISO-IC reduces to  $K$ decoupled  Gaussian point-to-point channels. In this case, proper Gaussian signaling is known to be optimal.} Therefore, for channel realization $\mathbf H^{(2)}$ with almost negligible interfering link for user $1$, no observable  performance gain can be achieved by improper Gaussian signaling.
    Another observation from Fig.~\ref{F:region0dBChannel2} is that the rank-1 improper signaling scheme \cite{255}, which is based on the equivalent real-valued $2\times 2$ MIMO-IC,  gives strictly smaller rate region than that by proper Gaussian signaling. In contrast, our proposed improper signaling schemes with either joint or separate covariance and pseudo-covariance optimizations, are observed  to perform no worse than the optimal proper Gaussian signaling, in accordance with our previous discussion. 
\begin{figure}
\centering
\includegraphics[width=3in, height=2in]{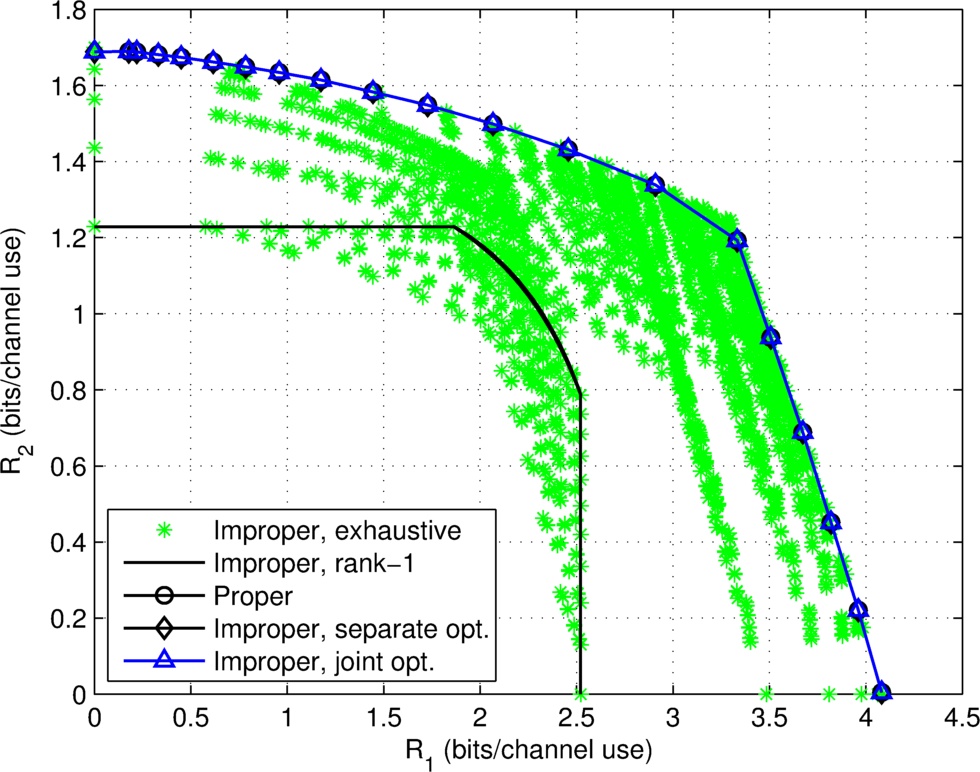}
\caption{Achievable rate region for the two-user SISO-IC with channel realization $\mathbf{H}^{(2)}$, and SNR = 0 dB. \vspace{-2ex}}\label{F:region0dBChannel2}
\end{figure}
\subsection{Max-Min Rate Comparison}
The rate-profile technique used in characterizing the Pareto boundary  of the achievable rate region can be directly applied for  maximizing the minimum (max-min) rate of the two users without TS. Specifically, the max-min problem for the two-user SISO-IC is equivalent to solving (P1) by using the rate-profile {\boldmath$\alpha$} $= (1/2, 1/2)$.  An alternative  max-min solution with improper Gaussian signaling was proposed in \cite{256}, where based on the equivalent real-valued $2\times 2$ MIMO-IC, the  transmit covariance matrix of the equivalent real-valued signal vector for each user is assumed to be of rank-1. Note that the use of rank-1 transmission in both \cite{255} and \cite{256} can be justified by the fact that the total DoF of two-user $2\times 2$ MIMO-ICs exactly equals to $2$ \cite{93}. For the ease of precoder design, zero-forcing (ZF) receivers were further  applied in \cite{256}. As a benchmark comparison, we also plot the max-min rate achievable by the simple time division multiple access (TDMA) scheme, where for simplicity, each user is assumed to access the channel for half of the time.

To evaluate the average max-min rates, $500$ random channel realizations are simulated, where the channel coefficients are drawn from independent zero-mean CSCG random variables. For this example, asymmetric channels are considered, where the average power values of the direct and interfering channels are $1$ and $0.2$, respectively, i.e., $h_{kk}\sim \mathcal{CN} (0, 1)$, $h_{k\bar k}\sim \mathcal{CN}(0, 0.2)$, $k=1,2$, $\bar k \neq k$. The obtained results are shown  in Fig.~\ref{F:maxMin2UserSISO}.  The optimal max-min rate achievable by proper Gaussian signaling and that by improper Gaussian signaling obtained by the exhaustive search method \cite{256} are also included in the figure. It is observed that in the low SNR regime, there is no notable gain by improper Gaussian signaling over conventional proper Gaussian signaling, which is due to the negligible  interference levels at low SNRs.  As SNR increases, the max-min rate by proper Gaussian signaling saturates since the total number of data streams transmitted, which is  $2$, exceeds the total number of DoF of the two-user SISO-IC, which is $1$. In contrast, the linear increase of the max-min rates with respect to the logarithm of SNR can be achieved either by TDMA, or by improper Gaussian signaling. It is worth remarking that over the entire SNR range, the proposed algorithms based on covariance and pseudo-covariance optimizations yield close-to-optimal performance obtained by exhaustive search method.  On the other hand, the rank-1 transmission with ZF receivers based on the equivalent real-valued MIMO-IC gives a near-optimal performance in the high-SNR regime, which is expected due to the optimality of ZF receivers at high SNR as well as the DoF optimality of rank-1 transmission as pointed out  in \cite{255,256}; however, in the low and moderate SNR regime, the rank-1 transmission scheme results in strictly suboptimal performance, which may be  due to the noise enhancement issue  associated with ZF receivers applied in \cite{256}, as well as the over-conservative number of data streams used by assuming rank-1 transmit covariance matrices.
\vspace{-1.5ex}
\begin{figure}
\centering
\includegraphics[width=3in, height=2in]{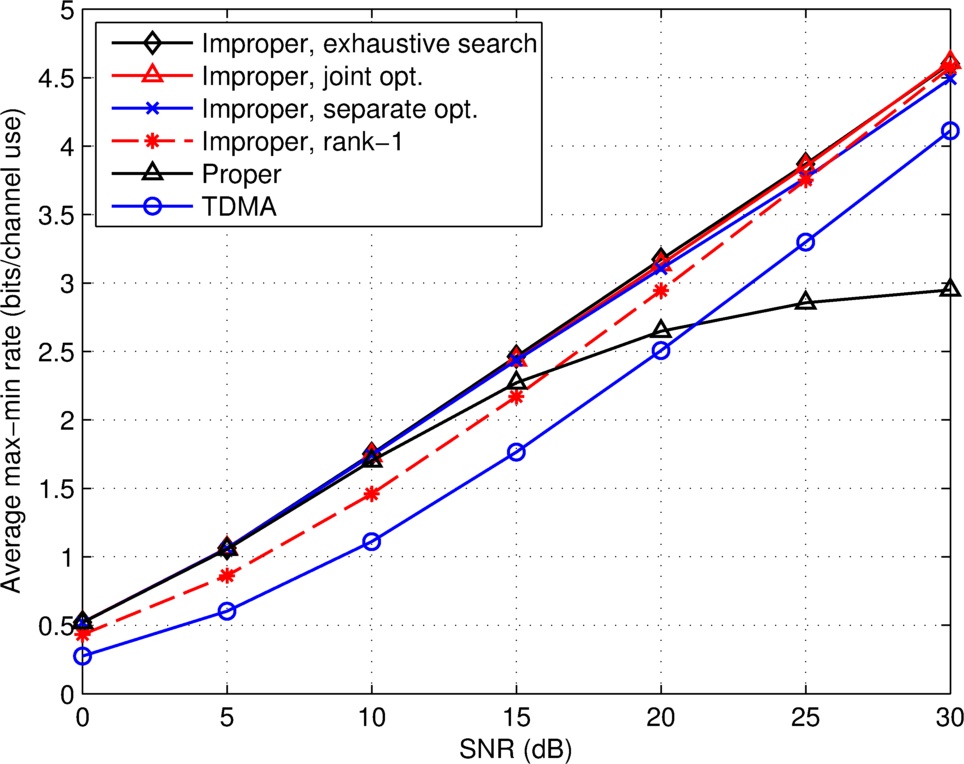}
\caption{Average max-min rate for the two-user SISO-IC.\vspace{-3ex}}\label{F:maxMin2UserSISO}
\end{figure}
\subsection{Sum-Rate Comparison}
In this subsection, the sum-rate maximization with improper Gaussian signaling is considered. By using  the equivalent real-valued MIMO-IC of the complex-valued SISO-IC, existing sum-rate maximization algorithms in the literature, such as the one via the iterative weighted MSE minimization (WMMSE) \cite{335}, can be applied directly  for maximizing the sum-rate of the two-user SISO-IC when improper Gaussian signaling is employed. However, although the WMMSE algorithm is guaranteed to converge to a local maximum of the sum-rate, it is not guaranteed to achieve the global sum-rate maximum. With the algorithms proposed in this paper via covariance and pseudo-covariance optimization, we illustrate with the following example  that our proposed algorithms strictly improve the achievable sum-rate over that by the WMMSE algorithm.

 In order to apply the WMMSE algorithm \cite{335} to the sum-rate maximization problem when improper Gaussian signaling is applied, we  transform the complex-valued channel to the equivalent real-valued MIMO channel, similarly as in \cite{189,255,256}. 
  Denote $\mathbf Q_k$ as the transmit covariance matrix of user $k$ in the equivalent real-valued $2\times 2$ MIMO-IC.  Without loss of generality, denote the rate-pair obtained by the iterative WMMSE scheme by $r\cdot(\delta_1,\delta_2)$, with $\delta_1\geq 0, \delta_2\geq 0$, and $\delta_1+\delta_2=1$.  With the rate-profile {\boldmath$\alpha$}$=(\delta_1,\delta_2)$, (P1) is solved to obtain a new sum-rate $R$. If $R>r$, then the rate-pair obtained by the WMMSE algorithm cannot be sum-rate optimal, and a strictly improved sum-rate can  be obtained by our proposed algorithms. Table~\ref{table2} shows the obtained sum-rate result.  The equivalence of the channel matrices and the converged transmit parameters between the original complex-valued SISO-IC and  the equivalent real-valued MIMO-IC for this example is shown in Table~\ref{table1}, with the SNR set as 10 dB.
  \begin{table*}
\caption{Equivalence  between complex-valued SISO-IC and real-valued MIMO-IC}
\centering
\begin{tabular}{|l | l |l|}
  \hline
    & Complex-valued SISO-IC & Real-valued MIMO-IC \\
   \hline
  \multirow{2}{*}{Channel} & $h_{11}=2.7388 - 0.2498i, h_{12}=0.9956 + 1.8047i$ & $\mathbf{H}_{11}=\bigg[\begin{matrix} 2.7388  &  0.2498 \\
   -0.2498  &  2.7388 \end{matrix} \bigg] , \mathbf{H}_{12}=\bigg [ \begin{matrix} 0.9956  & -1.8047\\
    1.8047  &  0.9956 \end{matrix} \bigg]$ \\
  & $h_{21}=0.6680 - 1.6470i, h_{22}=0.4760 + 1.2706i$ &  $\mathbf H_{21}=\bigg[\begin{matrix} 0.6680  &  1.6470\\
   -1.6470  &  0.6680 \end{matrix} \bigg],  \mathbf H_{22}=\bigg[\begin{matrix} 0.4760 &  -1.2706\\
    1.2706   & 0.4760 \end{matrix} \bigg]$ \\
  \hline
  \multirow{2}{*}{WMMSE Init.} & $C_{x_1}=10.000, \quad \widetilde C_{x_1}=9.546e^{i 0.5512}$ & \multirow{2}{*}{$\mathbf{Q}_1=\bigg [ \begin{matrix} 9.0660   & 2.4998\\
      2.4998   & 0.9340 \end{matrix} \bigg], \mathbf{Q}_2=\bigg [ \begin{matrix} 1.3051 &   0.9125\\
0.9125  &  8.6949 \end{matrix} \bigg]$ }\\
      & $C_{x_2}=10.000, \quad \widetilde C_{x_2}=7.6118 e^{i 2.8995}$ & \\
\hline
\multirow{2}{*}{WMMSE Sol.} & $C_{x_1}=9.9981,  \quad \widetilde C_{x_1}=9.9981e^{
  i 0.4575}$ & \multirow{2}{*}{$\mathbf{Q}_1=\bigg [ \begin{matrix} 9.4840    &  2.2081\\
    2.2081   & 0.5141\end{matrix} \bigg], \mathbf{Q}_2=\bigg [ \begin{matrix} 0.0047 &  -0.2174\\
   -0.2174  &  9.9752 \end{matrix} \bigg]$} \\
      & $C_{x_2}=9.9800, \quad \widetilde C_{x_2}=9.9800 e^{-i 3.0980}$ & \\
\hline
\multirow{2}{*}{Separate opt.} & $C_{x_1}=8.7366, \quad \widetilde C_{x_1}=8.7366$ & \multirow{2}{*}{$\mathbf{Q}_1=\bigg [ \begin{matrix} 8.7366   &      0\\
         0   &      0\end{matrix} \bigg], \mathbf{Q}_2=\bigg [ \begin{matrix} 9.9881  &  0.0708\\
    0.0708   & 0.0006 \end{matrix} \bigg]$} \\
      & $C_{x_2}=9.9887, \quad \widetilde C_{x_2}=9.9885 e^{i 0.0142}$ &\\
    \hline
\multirow{2}{*}{Joint opt.} & $C_{x_1}=10.00, \quad \widetilde C_{x_1}=10.00e^{i1.1204}$ & \multirow{2}{*}{$\mathbf{Q}_1=\bigg [ \begin{matrix} 7.1768 & 4.5013\\
         4.5013 &  2.8232\end{matrix} \bigg], \mathbf{Q}_2=\bigg [ \begin{matrix} 7.5137 &  4.3221\\
   4.3221  &  2.4863 \end{matrix} \bigg]$} \\
      & $C_{x_2}=10.00, \quad \widetilde C_{x_2}=10.00e^{-i 1.0440}$ &\\
  \hline
\end{tabular}
\label{table1}
\end{table*}
\begin{table*}
\caption{Sum-Rate comparison}
\centering
\begin{tabular}{|l | c |c |c|}
\hline
 & WMMSE & Separate opt. & Joint opt. \\
\hline
Rate-pair & $(2.8673,1.8102)$ & $(3.4079,2.1515)$ & $(3.4761,2.2078 )$ \\
\hline
Sum-rate & 4.6775 & 5.5594 & 5.6839 \\
\hline
Improvement & -- & $18.85\%$
 & $21.52\%$ \\
\hline
\end{tabular}
\label{table2}
\end{table*}
 \section{Conclusion}\label{S:conclusions}
This paper studied the transmit optimization for Gaussian ICs when improper or circularly asymmetric complex Gaussian signaling is applied. Under the assumption that the interference is treated as additive Gaussian noise, it was shown that the use of conventional  proper or circularly symmetric complex Gaussian signaling may result in undesired rate loss. A new  achievable rate expression for the general MIMO-IC was derived, which is expressed as a summation of the rate achievable with the conventional proper Gaussian signaling, and an additional  term due to the use of improper Gaussian signaling. This result provides a useful  method to improve the rate over the conventional proper Gaussian signaling  by separately optimizing the covariance and pseudo-covariance matrices.  We also proposed the technique  of widely linear precoding, which efficiently maps the proper Gaussian information-bearing signals to the improper Gaussian transmitted signals with any given pair of covariance and pseudo-covariance matrices. Furthermore, for the two-user SISO-IC, we formulated the optimization problem to characterize the Pareto boundary of the achievable rate region via the rate-profile method. Both joint and separate covariance and pseudo-covariance optimization algorithms were proposed, both of which outperform the conventional proper Gaussian signaling and provide advantages over existing improper Gaussian signaling schemes.
 \appendices
 \section{Proof of Lemma~\ref{L:pos}}\label{A:pos}
The derivations  for \eqref{E:pos1} and \eqref{E:pos2} follow similar arguments. For brevity, we only show that of \eqref{E:pos1} as follows:
\begin{align}
|\xC_{y_k}|^2+\sigma^4&\leq \big(|\xC_{y_k}|+\sigma^2\big)^2
=\Big(\big|h_{k1}^2\xC_{x_1}+h_{k2}^2\xC_{x_2}\big|+\sigma^2\Big)^2 \notag \\
&\overset{(a)}{\leq} \Big(|h_{k1}|^2 |\xC_{x_1}|+|h_{k2}|^2 |\xC_{x_2}|  +\sigma^2 \Big)^2 \notag \\
& \overset{(b)}{\leq} \Big(|h_{k1}|^2 C_{x_1}+|h_{k2}|^2 C_{x_2}  +\sigma^2  \Big)^2 = C_{y_k}^2, \notag
\end{align}
where $(a)$ follows from the triangle inequality, and $(b)$ is true due to the constraint \eqref{C:validPairSISO} in (P1.1).

\section{Proof of Theorem~\ref{T:thetaRegion}}\label{A:thetaRegion}
For notational convenience, in this appendix, we use $X_1$ and $X_2$ to represent $\xC_{x_1}$ and $\xC_{x_2}$, respectively. First, the following proposition shows that to solve (P1.9), we may consider exterior solutions only, i.e., the solutions at which at least one of the inequality constraints is active.
\begin{proposition}\label{Pr:active}
If $\{X_1,X_2\}$ is feasible to (P1.9) with $|X_1|<C_{x_1}^{\star}$ and $|X_2|<C_{x_2}^{\star}$, then there exists another feasible solution $\{X_1', X_2'\}$ with $|X_1'|=C_{x_1}^{\star}$ or $|X_2'|=C_{x_2}^{\star}$.
\end{proposition}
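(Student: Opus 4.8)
The plan is a simple rescaling argument that exploits the nonnegativity of the constants $b_1,b_2$ in the quadratic constraints of (P1.9). Both \eqref{C:ineq1} and \eqref{C:ineq2} have the uniform form $a_k\bigl|h_{k1}^2X_1+h_{k2}^2X_2\bigr|^2+b_k\le |X_{\bar k}|^2$ (with $\bar 1=2$, $\bar 2=1$), and, as noted just after (P1.9), the assumption $R\ge r^{\star}$ forces $\beta_k\ge 1$ and hence $b_k\ge 0$. First I would show that feasibility of these two constraints is preserved under the common scaling $(X_1,X_2)\mapsto(\gamma X_1,\gamma X_2)$ for any $\gamma\ge 1$, and then inflate the given interior-feasible point until one of the magnitude bounds \eqref{C:ineq3}, \eqref{C:ineq4} becomes tight.

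For the monotonicity step, writing $Q_k=\bigl|h_{k1}^2X_1+h_{k2}^2X_2\bigr|^2$, for $\gamma\ge 1$ one has
\[
a_k\gamma^2 Q_k+b_k\;\le\;\gamma^2\bigl(a_kQ_k+b_k\bigr)\;\le\;\gamma^2|X_{\bar k}|^2 ,
\]
where the first inequality uses $b_k\ge 0$ together with $\gamma^2\ge1$, and the second uses feasibility of $(X_1,X_2)$; hence $(\gamma X_1,\gamma X_2)$ still satisfies \eqref{C:ineq1} and \eqref{C:ineq2}. Assuming for the moment $X_1\ne 0$ and $X_2\ne 0$, I would take $\gamma^{\star}=\min\{C_{x_1}^{\star}/|X_1|,\;C_{x_2}^{\star}/|X_2|\}$, which is strictly greater than $1$ since $|X_1|<C_{x_1}^{\star}$ and $|X_2|<C_{x_2}^{\star}$. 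Then $(X_1',X_2')=(\gamma^{\star}X_1,\gamma^{\star}X_2)$ is feasible for (P1.9) — constraints \eqref{C:ineq3} and \eqref{C:ineq4} are respected by the very choice of $\gamma^{\star}$ — it is distinct from $(X_1,X_2)$, and, depending on which ratio attains the minimum, either $|X_1'|=C_{x_1}^{\star}$ or $|X_2'|=C_{x_2}^{\star}$, which is exactly the claim.

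The step needing care is the degenerate case $X_1=0$ (or $X_2=0$), where the scaling does nothing. I would rule it out: putting $X_1=0$ in \eqref{C:ineq2} yields $a_2|h_{22}|^4|X_2|^2+b_2\le 0$, which is impossible whenever $b_2>0$, i.e., whenever $R>r^{\star}$ (using $\alpha_2>0$ and a nonzero interference gain $h_{21}$); the symmetric argument excludes $X_2=0$. Since the discrete-$\theta$ reduction of Theorem~\ref{T:thetaRegion} is only invoked for $R>r^{\star}$ — the value $R=r^{\star}$ making (P1.9) trivially feasible via $X_1=X_2=0$ — this covers all relevant situations, and everything else is a routine inequality check, so I would keep the write-up short.
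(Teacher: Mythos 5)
Your argument is essentially identical to the paper's proof: both scale the interior-feasible point by $\tau=\min\{C_{x_1}^{\star}/|X_1|,\,C_{x_2}^{\star}/|X_2|\}>1$ and use $b_k\geq 0$ together with $\tau^2\geq 1$ to show that \eqref{C:ineq1} and \eqref{C:ineq2} are preserved while one of \eqref{C:ineq3}, \eqref{C:ineq4} becomes active. Your additional discussion of the degenerate case $X_1=0$ or $X_2=0$ (which the paper silently skips, and which only genuinely obstructs the scaling when both vanish) is a small but legitimate tightening rather than a different approach, so the proposal stands as correct.
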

\begin{proof}
Let $\tau\triangleq \min\big\{\frac{C_{x_1}^{\star}}{|X_1|}, \frac{C_{x_2}^{\star}}{|X_2|}\big\}$. Then $\tau >1$. Define $X_1'=\tau X_1$ and $X_2' =\tau X_2$. Then the constraints in \eqref{C:ineq3} and \eqref{C:ineq4} are satisfied by $X_1'$ and $X_2'$, i.e., $|X_1'|\leq C_{x_1}^{\star}$ and $|X_2'|\leq C_{x_2}^{\star}$. Furthermore, at least one of them is satisfied with equality. The constraint in \eqref{C:ineq1} is also satisfied since
\begin{align}
&a_1|h_{11}^2X_1'+h_{12}^2X_2'|^2+b_1
=\tau^2a_1|h_{11}^2X_1+h_{12}^2X_2|^2+b_1 \notag \\
&\overset{(a)}{\leq} \tau^2 (a_1|h_{11}^2X_1+h_{12}^2X_2|^2+b_1)\overset{(b)}{\leq}  \tau^2|X_2|^2=|X_2'|^2, \notag
\end{align}
where $(a)$ comes from  $\tau>1$ and $b_1\geq 0$, and $(b)$ is true since $\{X_1, X_2\}$ is feasible to (P1.9).
Similarly, \eqref{C:ineq2} is also satisfied.  Therefore, $\{X_1', X_2'\}$ is a feasible solution to (P1.9) with at least one of the inequality constraints being active.
\end{proof}

  Next, we derive Theorem~\ref{T:thetaRegion} using the Karush-Kuhn-Tucker (KKT) conditions, which are necessary optimality conditions for the constrained optimization problem (P1.9) \cite{202}. For notational convenience, denote the inequality constraints \eqref{C:ineq1}--\eqref{C:ineq4} by $f_1\leq 0, f_2\leq 0, h_1\leq 0$ and $ h_2\leq 0$, respectively.  Denote $\lambda_1, \lambda_2, \mu_1, \mu_2$ as the corresponding dual variables, respectively. The Lagrangian function of (P1.9) is given by
  \begin{align}
\hspace{-0.5ex}L&(X_1,  X_2, \lambda_1, \lambda_2, \mu_1, \mu_2)=\lambda_1 \Big\{a_1 \big(|h_{11}|^4X_1^*X_1+|h_{12}|^4X_2^*X_2 \notag \\
 &+2\Re \{h_{11}^{2*}h_{12}^2X_1^*X_2\}\big) +b_1-X_2^*X_2\Big\}
+\lambda_2\Big\{a_2 \big(|h_{21}|^4X_1^*X_1 \notag \\
&+|h_{22}|^4X_2^*X_2
+2\Re \{h_{21}^{2*}h_{22}^2X_1^*X_2\}\big)
+b_2-X_1^*X_1\Big\} \notag \\
&+\mu_1(X_1^*X_1-C_{x_1}^{\star 2}) +\mu_2(X_2^*X_2-C_{x_2}^{\star 2}). \label{E:Lagrangian}
\end{align}
For $\{X_1, X_2\}$ to be a solution to (P1.9), the following KKT conditions must be satisfied: 
\begin{enumerate}
\item{Dual feasibility: $\lambda_1\geq 0, \  \lambda_2\geq 0, \ \mu_1\geq 0, \  \mu_2\geq 0.$
}
\item{Zero derivative: The derivatives of the Lagrangian function \eqref{E:Lagrangian} are zero\cite{251}:
    \begin{align}
    \frac{\partial L}{\partial X_2^*}=0 \Rightarrow & -X_2 \underbrace{\big [ (a_1|h_{12}|^4-1)\lambda_1+\lambda_2a_2|h_{22}|^4+\mu_2 \big]}_{\triangleq c_2}\notag \\ &=X_1(\lambda_1\underbrace {a_1h_{12}^{2*}h_{11}^2}_{\triangleq V_1}+\lambda_2\underbrace{a_2h_{22}^{2*}h_{21}^2}_{\triangleq V_2})\notag
    \end{align}
    \begin{align}
\frac{\partial L}{\partial X_1^*}=0 \Rightarrow & X_1 \underbrace{\big [ (a_2|h_{21}|^4-1)\lambda_2+\lambda_1a_1|h_{11}|^4+\mu_1 \big]}_{\triangleq c_1}\notag \\ &=-X_2(\lambda_1\underbrace {a_1h_{11}^{2*}h_{12}^2}_{=V_1^*}+\lambda_2\underbrace{a_2h_{21}^{2*}h_{22}^2}_{=V_2^*}),\notag \\
& \Downarrow \notag \\
 -X_2c_2&=X_1(\lambda_1V_1+\lambda_2V_2), \label{E:X2X1}\\
 X_1c_1&=-X_2(\lambda_1V_1^*+\lambda_2V_2^*). \label{E:X2X1Another}
    \end{align}
    }
    \item{Complementary slackness: $\lambda_1f_1= 0,  \lambda_2 f_2= 0, \mu_1 h_1= 0, \mu_2 h_2= 0.$ 
        }
\end{enumerate}
As discussed previously, without loss of generality, $X_1$ can be assumed to be a nonnegative real number. The following cases are then considered to derive the possible phases $\theta$ of $X_2$:
\begin{itemize}
\item{Case I: $f_1=0$ and $f_2\neq 0$. Then from the complementary slackness condition, we have $\lambda_1>0$ and $\lambda_2=0$. Substituting them into \eqref{E:X2X1Another}, we have
    \[ X_2=-\frac{X_1(\lambda_1a_1|h_{11}|^4+\mu_1)}{\lambda_1|V_1|^2}V_1.\]
     Since $\lambda_1\geq 0$, $\mu_1\geq 0$, $a_1\geq 0$ and $X_1\geq 0$, the phase $\theta$ of $X_2$ equals to that of $V_1$ rotated by $\pi$, which is $\pi+2(\phi_{11}-\phi_{12})$ since $V_1=a_1h_{12}^{2*}h_{11}^2$.}
\item{Case II: $f_1\neq 0$ and $f_2= 0$. Then $\lambda_1=0$ and $\lambda_2>0$. 
     Similarly, by using \eqref{E:X2X1}, 
     we have $\theta=\pi+2(\phi_{21}-\phi_{22})$.}
\item{Case III: $f_1\neq 0$ and $f_2 \neq 0$, then $\lambda_1=0$ and $\lambda_2=0$. By substituting them into \eqref{E:X2X1} and \eqref{E:X2X1Another}, we have $\mu_2X_2=0$ and $\mu_1X_1=0$. If $X_2=0$, then $\theta$ can be any arbitrary value. If $X_1=0$, then $X_2=0$ is implied due to \eqref{C:ineq2} and again, $\theta$ is arbitrary.  Therefore, we may focus on $\mu_1=0$ and $\mu_2=0$ only. However, Proposition~\ref{Pr:active} suggests that we may consider the exterior solutions only, i.e., either $h_1=0$ or $h_2=0$ is satisfied. Thus, $\mu_1>0$ or $\mu_2>0$ can be assumed. Therefore, case III can be ignored without loss of optimality.}
\item{Case IV: $f_1=0$ and $f_2=0$, then $\theta$ belongs to the solution set for the equations given in \eqref{E:thetaA} and \eqref{E:thetaB}, which are obtained by satisfying the constraints $f_1$ in \eqref{C:ineq1} and $f_2$ in \eqref{C:ineq2} with equality. \eqref{E:thetaA} and \eqref{E:thetaB} correspond to $|X_1|=C_{x_1}^{\star}$ and $|X_2|=C_{x_2}^{\star}$, respectively, which can be assumed without loss of generality due to Proposition~\ref{Pr:active}.}
\end{itemize}
This completes the proof of Theorem~\ref{T:thetaRegion}.
\section{Solving $\Theta_{\mathcal{A}}$ and $\Theta_{\mathcal{B}}$ in Theorem~\ref{T:thetaRegion}}\label{A:thetas}
 In this appendix, we show the steps to solve  $\Theta_{\mathcal{A}}$, while the solution of $\Theta_{\mathcal{B}}$ can be obtained similarly and thus omitted. Note that the unknown variables in \eqref{E:thetaA} are $\theta$ and $t$. 
 After some manipulations, \eqref{E:thetaA} can be written as
\begin{align}
 t &\cos \eta + d_1t^2+d_2=0 \label{E:eq1}\\
 t &\cos(\eta+\omega)+d_3t^2+d_4=0, \label{E:eq2}
 \end{align}
 where
 \begin{align}
&\omega \triangleq 2(\phi_{22}+\phi_{11}-\phi_{12}-\phi_{21}), \ \eta \triangleq \theta+2(\phi_{12}-\phi_{11}), \label{E:thetaEta}\\
&d_1\triangleq \frac{a_1|h_{12}|^4-1}{2a_1|h_{11}|^2|h_{12}|^2C_{x_1}^{\star}}, \
d_2\triangleq \frac{a_1|h_{11}|^4C_{x_1}^{\star 2}+b_1}{2a_1|h_{11}|^2|h_{12}|^2C_{x_1}^{\star}}, \notag \\
&d_3\triangleq \frac{|h_{22}|^2}{2|h_{21}|^2C_{x_1}^{\star}},\qquad
d_4=\frac{(a_2|h_{21}|^4-1)C_{x_1}^{\star 2}+b_2}{2a_2|h_{21}|^2|h_{22}|^2C_{x_1}^{\star}}.\notag
\end{align}
From \eqref{E:eq2}, we have
\begin{align}
& t\sin\eta\sin\omega=t\cos\eta\cos\omega+d_3t^2+d_4 \Rightarrow \notag \\
&  t^2(1-\cos^2\eta)\sin^2\omega=(t\cos\eta\cos\omega+d_3t^2+d_4)^2 \label{E:eq3}
\end{align}
Solving $\cos \eta$ from \eqref{E:eq1}, we have
\begin{align}
\cos \eta=-(d_1t^2+d_2)/t. \label{E:eq4}
\end{align}
Substituting \eqref{E:eq4} into \eqref{E:eq3} gives the following fourth order polynomial equation with respect to $t$:
\begin{align}
 [t^2 -(d_1t^2+d_2)^2]\sin^2\omega=[(d_3-d_1\cos\omega)t^2+d_4-d_2\cos\omega]^2.\notag
\end{align}
Since the above equation only has terms involving $t^2$, it can be transformed to the following quadratic equation by setting $z=t^2$, i.e.,
\begin{align}
&e_1z^2+e_2z+e_3=0,\\
\text{where }
e_1=&d_3^2+d_1^2-2d_1d_3\cos\omega, \notag \\
e_2=&2(d_1d_2+d_3d_4)-2(d_1d_4+d_2d_3)\cos\omega-\sin^2\omega,\notag \\
e_3=&d_2^2+d_4^2-2d_2d_4\cos\omega.\notag
\end{align}
Then $z$ can be easily solved. Since $z=t^2$ and $t$ is the magnitude of $\xC_{x_2}$, only the solutions of $z$ that are real and satisfy $0 \leq z \leq C_{x_2}^{\star 2}$ need to be kept, whereby the values for $t$ are obtained. For those values of $t$ satisfying $|(d_1t^2+d_2)/t|\leq 1$, we can get the value for $\eta$ based on \eqref{E:eq4}, i.e., $\eta=\arccos [-(d_1t^2+d_2)/t]$ or $\eta=2\pi-\arccos [-(d_1t^2+d_2)/t]$. Then $\theta$ can be obtained from \eqref{E:thetaEta}. If no such solutions exist, then $\Theta_{\mathcal{A}}$ is set to empty. 

 \bibliographystyle{IEEEtran}
\bibliography{IEEEabrv,IEEEfull}
\end{document}